\newcommand{\cA}{\mathcal{A}}
\newcommand{\cB}{\mathcal{B}}
\newcommand{\cE}{\mathcal{E}}
\newcommand{\cH}{\mathcal{H}}
\newcommand{\cM}{\mathcal{M}}
\newcommand{\cN}{\mathcal{N}}
\newcommand{\cP}{\mathcal{P}}
\newcommand{\cQ}{\mathcal{Q}}
\newcommand{\cS}{\mathcal{S}}
\newcommand{\cX}{\mathcal{X}}
\newcommand{\Id}{\mathbb{I}}
\newcommand{\tr}{\text{Tr}}
\newtheorem{theorem}{Theorem}
\newtheorem{lemma}[theorem]{Lemma}
\newtheorem{example}{Example}
\newtheorem{remark}{Remark}
\begin{document}

\title{Characterizing incompatibility of quantum measurements via their Naimark extensions}
\author{Arindam Mitra$^{1,2}$}
\affiliation{$^1$Optics and Quantum Information Group, The Institute of Mathematical Sciences,
C. I. T. Campus, Taramani, Chennai 600113, India.\\
$^2$Homi Bhabha National Institute, Training School Complex, Anushakti Nagar, Mumbai 400094, India.}
\author{Sibasish Ghosh$^{1,2}$}
\affiliation{$^1$Optics and Quantum Information Group, The Institute of Mathematical Sciences,
C. I. T. Campus, Taramani, Chennai 600113, India.\\
$^2$Homi Bhabha National Institute, Training School Complex, Anushakti Nagar, Mumbai 400094, India.}
\author{Prabha Mandayam}
\affiliation{Department of Physics, Indian Institute of Technology Madras, Chennai - 600036, India.}

\date{\today}

\begin{abstract}
We obtain a formal characterization of the compatibility or otherwise of  a set of positive-operator-valued measures (POVMs) via their Naimark extensions. We show that a set of POVMs is jointly measurable if and only if there exists a single Naimark extension, specified by a fixed ancilla state on the same ancilla Hilbert space, that maps them to a set of commuting projective measurements (PVMs).  We use our result to obtain an easily checkable sufficient condition for the compatibility of a pair of dichotomic observables in any dimension. This in turn leads to a characterization of the compatibility regions for some important classes of observables including a pair of unsharp qubit observables. Finally, we also outline as to how our result provides an alternate approach to quantifying the incompatibility of a general set of quantum measurements. 
\end{abstract}

\maketitle

The existence of incompatible measurements is a fundamental feature of quantum theory~\cite{heisenberg} which has been well-studied, both from the point of view of quantum foundations~\cite{heinosaari} as well as quantum information processing~\cite{mubs, qkd_d,Heino_ran}. Of central interest is the question of characterizing the incompatibility of a set of quantum measurements. It is well known that the incompatibility of a pair of projection-valued measurements (PVMs) is synonymous with their non-commutativity~\cite{lahiti}. Formally, a pair of PVMs is jointly measurable if and only if they pairwise commute~\cite{lahiti}, that is, if and only if each effect of one PVM commutes with all other effect of any other PVM. Furthermore, for a set of PVMs, it is known that the existence of a joint measurement for the whole set is equivalent to the existence of joint measurements for every pair of PVMs in the set~\cite{heinosaari08, rkunj_2014, Oh_1, Oh_2}.

Motivated by this connection between joint measurability and commutativity, several measures of incompatibility have been proposed and studied for a set of PVMs~\cite{wehner_winter, incompatibility_BM, PM_MS1}. While the  measures discussed in~\cite{wehner_winter, incompatibility_BM} focus on rank-one PVMs, the incompatibility measure in~\cite{PM_MS1} holds for general PVMs with higher-rank projectors. 

The equivalence between joint measurability and commutativity breaks down in the case of general quantum measurements, beyond projective measurements. This is best captured by the well known example of a set of three positve-operator-valued measures (POVMs) in two dimensions which are pairwise jointly measurable, but for which there does not exist a triplewise joint measurement~\cite{heinosaari08, jointmeas_graph, Liang, Kunjwal_Sibasish}. It has been proposed to quantify the incompatibility of a set of POVMs based on the closest distance to a set of compatible measurements, or, alternately, using the amount of noise required to approximate an incompatible set using a set of compatible measurements~\cite{heinosaari}. 

In this work, we propose an alternate approach to quantifying the incompatibility of a set of POVMs, using the Naimark extension. Recall that Naimark's theorem states that every POVMS can be realised as a PVM in an extended Hilbert space. Previously, the Naimark extension has been used to obtain a necessary and sufficient condition for a pair of POVMs to be incompatible~\cite{beneduci,kiukas}. More recently, the Naimark extension has been used to obtain a resource theory of coherence for POVMs~\cite{coherence_povm}. 

Here we show that if a given set of POVMs (acting on the states of a quantum system) are jointly measurable, one can always find out a corresponding set of \emph{compatible} Naimark extensions of these POVMs, in one and the same Hilbert space. Conversely, we argue that if the set of Naimark extensions corresponding to a given set of POVMs are compatible, it does not necessarily imply that the POVMs are jointly measureable. In the special case where the said Naimark extensions use one and the same ancilla state, the corresponding POVMs can be jointly measured.  

The rest of the paper is organised as follows. In Sec.~\ref{sec:prelim} we briefly review the mathematical formalism of the Naimark dilation theorem. We then recall the known results relating to incompatibility of a pair of POVMs in Sec.~\ref{sec:2povms}. In Sec.~\ref{sec:Npovms}, we present our result on the relation between the incompatibility of a set of $N$ POVMs and the associated PVMs constructed via Naimark extension. We discuss two specific applications of our result in Sec.~\ref{sec:applications}. In Sec.~\ref{sec:measure}, we present an alternate approach to quantify incompatibility of general quantum measurements using the Naimark extension. In Sec.~\ref{sec:dichotomic}, we show that our result can be used to obtain a sufficient condition for the compatibility of a pair of dichotomic observables. This in turn leads to a characterization of the compatibility of a pair of unsharp qubit observables, as shown in Sec.~\ref{sec:unsharp}. Finally, we conclude with a summary and future outlook in Sec.~\ref{sec:concl}. 

\section{Preliminaries}\label{sec:prelim}

A projection-valued measure (PVM) $\cP$ (von Neumann measurement, also called a `sharp measurement') with $n$ outcomes on a $d$-dimensional Hilbert space $\cH_{S}$ is characterised by a set of $n$ projection operators  $\{P (i), \; i=1,\ldots,n\}$, such that $\sum_{i=1}^{n}P(i) = I_{\cH_{S}}$. The probability of obtaining outcome $i$, when measuring state $\rho \in \cB(\cH_{S})$ is given by $p(i) = \tr[\rho P (i)]$. Here and in what follows, $\cB(\cH_{S})$ denotes the set of bounded linear operators acting on the Hilbert space $\cH_{S}$. 

A positive-operator-valued measure (POVM) $\cM$ with $n$ outcomes on a $d$-dimensional Hilbert space space $\cH_{S}$ is described by a set of $n$ positive semi definite operators, called \emph{effects}, $\{E (j), j = 1,2,\ldots, n\}$, satisfying ,
\[I_{\cH_{S}} \geq E(j) \geq 0, \forall j; \; \sum_{j=1}^{n}E (j) = I_{\cH_{S}}. \]
For any state $\rho \in \cS(\cH_S)$, the probability of obtaining outcome $j$ is given by $p(j) = tr[\rho E(j)]$, where $\cS(\cH_S)$ denotes the state space of the system.  For projective measurements, there is a canonical description of the post-measurement state corresponding to a given outcome ($i$) in terms of the corresponding projection operator, via the map $\rho \rightarrow \frac{P(i)\rho P(i)}{\tr(\rho P(i))}$. There is no such canonical association for POVMs. For example, one set of measurement operators that maybe associated with the POVM $\cM=\{E(j)\}$ is the set $\{A (j)\}$ satisfying $E (j) = A (j)^{\dagger}A(j)$, so that the post-measurement state is obtained as $\rho \rightarrow \frac{A(j)\rho A^{\dagger} (j)}{\tr[A(j)\rho A^{\dagger} (j)]}$. This implementation of the POVM $\cM$ is often referred to as the \emph{L\"uders instrument}~\cite{Heinosaari_Ziman}. 

Two POVMs $\cM_{1}=\{E_{1}(j_{1})\vert j_{1} = 1,\ldots,n_{1}\}$ and $\cM_{2}=\{E_{2}(j_{2})\vert j = 1,\ldots,n_{2}\}$ are said to be jointly measurable if there exists a \emph{joint} measurement $\cM=\{ E(i,j)\}$, such that the POVMs $\cM_{1}$ and $\cM_{2}$ can be realised as the marginals of the POVM $\cM$. In other words, the elements $\{E_{i}(j_{1})\}$ and $\{E_{2}(j_{2})\}$ can be realised as,
\[ E_{1}(j_{1}) = \sum_{j_{2}=1}^{n_{2}} E(j_{1},j_{2}); \; E_{2}(j_{2})= \sum_{j_{1}=1}^{n_{1}} E(j_{1},j_{2}),  \forall j_{1},j_{2}.\]

Naimark's theorem~\cite{Peres, paris_Naimark} states that every POVM on a $d$ dimensional Hilbert space $\cH_{S}$ can be realised as a projection-valued measure (PVM) on an extended Hilbert space $\cH'$ of dimension $d'> d$. Such an extension may be obtained using a direct sum or a direct product construction. Specifically, let $\cP = \{P(i)\}$ denote the PVM acting on the space $\cH'$, obtained via a Naimark extension of the POVM $\cM = \{M (i)\}$. One approach to construct the elements of the PVM $\cP$ is by using a direct sum extension of the space $\cH_{S}$ to the space $\cH' \equiv \cH_{S} \oplus \cH_{A}$, such that,
\[ \tr[ E (i) \rho] = \tr \left[ P(i)(\rho \oplus \mathbf{0} )\right], \forall i, \; \forall \rho \in \cB(\cH_S), \]
where $\mathbf{0}$ is the null matrix of dimension $(d_{A}-d)$. 

Alternately, we could use the so-called \emph{canonical} Naimark extension, where the POVM is realised via a projective measurement on an ancilla system, after the system and the ancilla interact via a suitable unitary. Let $\cH_{A}$ denote the Hilbert space of the ancilla system and suppose the ancilla starts out in the state $\sigma_{A} \in \cB(\cH_{A})$, where $\cB(\cH_{A})$ denotes the set of bounded linear operators on the Hilbert space $\cH_{A}$. Then, by Naimark dilation theorem, there exists a PVM $\cP=\{P(i)\}$ acting on $\cH\otimes \cH_{A}$, such that,  
\begin{equation}
 \tr_{\cH_{S}} [\rho M (i)] = \tr_{\cH_{S}\otimes \cH_{A}}\left[P(i) (\rho \otimes \sigma_{A}) \right],
 \end{equation}
 for all states $\rho \in \cB(\cH_{S})$.
As evident in the case of direct sum based Nairmark extenstion the smallest dimension of the ancilla systems needed for such a tensor product Naimark extension is known to be $d_{A} = \sum_{i=1}^{n} r_{i} - d$, where $r_{i}$ is the rank of the $i^{\rm th}$ POVM effect~\cite{ancilla_dim}. 

The Naimark extension is not unique and the PVM associated with a specific POVM depends on the dimensions of the ancilla system $\cH_{A}$ as well as the fiducial state that the ancilla systems starts out in. In this note, we use the canonical, tensor-product form of the Naimark extension in our proofs. However, our results can equivalently be proved using the direct sum form as well.

\subsection{Compatibility of quantum measurements} \label{sec:2povms}

We next review some of the known results on the compatibility of a set of quantum measurements. First, we note the well known result on the joint measurability of a set of PVMs, originally proved in~\cite{heinosaari08, rkunj_2014}. 


\begin{lemma}[Compatibility of PVMs]
A set of PVMs $\Pi = \{\cP_1,\cP_2, \ldots, \cP_{N}\}$ is compatible, iff the PVMs $\{\cP_{j}\}$ in that set  are pairwise commuting. Furthermore, there exists a unique joint measurement which is a projective measurment. \label{lem:pvm}
\end{lemma}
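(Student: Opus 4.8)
The plan is to prove the two implications separately and then bootstrap the two-measurement case to the statement about a unique projective joint measurement.

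For the \emph{if} direction, assume the family $\{\cP_j\}$ is pairwise commuting, with $\cP_j=\{P_j(i_j)\}$, and define the candidate joint measurement $P(i_1,\dots,i_N):=P_1(i_1)P_2(i_2)\cdots P_N(i_N)$. Three routine checks finish this direction: (i) each $P(i_1,\dots,i_N)$ is again an orthogonal projection, since a product of pairwise commuting projections can be reordered freely (hence is self-adjoint) and is idempotent; (ii) $\sum_{i_1,\dots,i_N}P(i_1,\dots,i_N)=\prod_j\big(\sum_{i_j}P_j(i_j)\big)=\prod_j I = I$; and (iii) collapsing the sums over the remaining indices shows the $j$-th marginal is exactly $\cP_j$. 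Thus pairwise commutativity is sufficient and the joint measurement it produces is a PVM.

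For the \emph{only if} direction, the core is a statement about two measurements: if PVMs $\cP=\{P(i)\}$ and $\cQ=\{Q(j)\}$ admit \emph{any} joint POVM $\{G(i,j)\}$ (not necessarily projective), then they commute and in fact $G(i,j)=P(i)Q(j)$. The technical heart is the observation that $G(i,j)\le\sum_{i'}G(i',j)=P(i)$ forces $G(i,j)=P(i)\,G(i,j)\,P(i)$ --- the standard fact that $0\le G\le P$ with $P$ a projection makes $G$ supported on the range of $P$, which one gets by squeezing $0\le (I-P)G(I-P)\le (I-P)P(I-P)=0$ and then extracting $G(I-P)=0$. Using this together with the analogous statement for $\cQ$ and the mutual orthogonality $P(i)P(i')=0$ for $i\ne i'$, the marginalization $P(i)Q(j)=P(i)\sum_{i'}G(i',j)$ kills every term except $i'=i$ and leaves $P(i)Q(j)=G(i,j)$; the symmetric computation gives $Q(j)P(i)=G(i,j)$, so the two PVMs commute and $G$ is uniquely determined.

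To pass to a general set, suppose $\cM=\{E(i_1,\dots,i_N)\}$ is a joint POVM for $\{\cP_1,\dots,\cP_N\}$. Each two-index marginal of $\cM$ is a joint POVM for the corresponding pair $\cP_j,\cP_k$, so the two-measurement result yields pairwise commutativity of the whole family; combined with the \emph{if} direction this establishes the equivalence. For uniqueness I would argue by induction on the number of ``retained'' indices: the two-measurement lemma already pins down every two-index marginal of $\cM$ as $P_j(i_j)P_k(i_k)$, and at each step one uses $E(\vec i)\le P_j(i_j)$ (hence $E(\vec i)=P_j(i_j)E(\vec i)P_j(i_j)$) together with orthogonality to promote a $k$-index marginal $P_1(i_1)\cdots P_k(i_k)$ to the $(k{+}1)$-index marginal $P_1(i_1)\cdots P_{k+1}(i_{k+1})$, ending at $E(\vec i)=\prod_j P_j(i_j)$, which is a PVM. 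I expect the only real obstacle here to be bookkeeping in this induction --- keeping track of which marginal the domination-and-orthogonality argument is applied to at each stage --- rather than any subtle operator estimate.
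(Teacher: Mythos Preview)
The paper does not actually prove this lemma; it records it as a known result and cites \cite{heinosaari08, rkunj_2014} for the proof. Your proposal therefore goes well beyond what the paper supplies, and the argument you sketch is correct and standard.

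One small index slip to fix: where you write $G(i,j)\le\sum_{i'}G(i',j)=P(i)$, the sum over $i'$ actually yields $Q(j)$, not $P(i)$; you mean $G(i,j)\le\sum_{j'}G(i,j')=P(i)$. With that corrected, the chain of reasoning---domination by a projection forcing $G=PGP$ via $(I-P)G(I-P)\le (I-P)P(I-P)=0$, the orthogonality trick $P(i)G(i',j)=P(i)P(i')G(i',j)P(i')=0$ for $i\ne i'$ to isolate the diagonal term, and the induction on the number of retained indices for uniqueness---is sound. The induction step works exactly as you say: if the $k$-index marginal is $P_1(i_1)\cdots P_k(i_k)$, then multiplying by $P_{k+1}(i_{k+1})$ and using $F(i_1,\dots,i_{k+1})\le P_{k+1}(i_{k+1})$ together with orthogonality of the $P_{k+1}(\cdot)$ kills all but one term in the sum over $i'_{k+1}$, giving the $(k{+}1)$-index marginal. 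No subtle operator estimates are hiding here; the bookkeeping is indeed the only thing to watch.
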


Such a necessary and sufficient condition for a pair of PVMs ($N=2$)~\cite{beneduci, lahiti} in turn implies the following necessary and sufficient condition for a pair of POVMs to be compatible, originally proved in~\cite{beneduci, kiukas,Haapasalo} using the Naimark dilation theorem. We rewrite the proof here, in terms of the tensor product Naimark extension, in preparation for our main result on the compatibility of a set of $N$ POVMs. 

\begin{theorem}
Two POVMs $\cM_{1}, \cM_{2}$ are compatible if and only if there exists at least one Naimark extension of both POVMs to a pair of commuting PVMs. \label{thm:2povms}
\end{theorem}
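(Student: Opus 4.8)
The plan is to prove both implications by explicit construction, using the canonical tensor-product Naimark extension throughout, on a common ancilla Hilbert space $\cH_A$ equipped with a single fixed fiducial state $\sigma_A$. For the ``only if'' direction, suppose $\cM_1=\{E_1(j_1)\}$ and $\cM_2=\{E_2(j_2)\}$ are compatible, with joint POVM $\cM=\{E(j_1,j_2)\}$, so that $E_1(j_1)=\sum_{j_2}E(j_1,j_2)$ and $E_2(j_2)=\sum_{j_1}E(j_1,j_2)$. I would first apply Naimark's theorem to the \emph{single} POVM $\cM$, dilating it to a PVM $\cP=\{P(j_1,j_2)\}$ on $\cH_S\otimes\cH_A$ with ancilla state $\sigma_A$, i.e.\ $\tr_{\cH_S\otimes\cH_A}[P(j_1,j_2)(\rho\otimes\sigma_A)]=\tr_{\cH_S}[\rho\,E(j_1,j_2)]$ for all $\rho$. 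Then I would pass to the coarse-grainings $P_1(j_1):=\sum_{j_2}P(j_1,j_2)$ and $P_2(j_2):=\sum_{j_1}P(j_1,j_2)$. Using the orthogonality $P(j_1,j_2)P(j_1',j_2')=\delta_{j_1j_1'}\delta_{j_2j_2'}P(j_1,j_2)$ inherited from $\cP$, one checks that $\cP_1=\{P_1(j_1)\}$ and $\cP_2=\{P_2(j_2)\}$ are again PVMs, that $P_1(j_1)P_2(j_2)=P(j_1,j_2)=P_2(j_2)P_1(j_1)$ so $\cP_1$ and $\cP_2$ commute, and that each $\cP_i$ is a Naimark extension of $\cM_i$ with respect to the \emph{same} $\sigma_A$ (e.g.\ $\tr_{\cH_S\otimes\cH_A}[P_1(j_1)(\rho\otimes\sigma_A)]=\sum_{j_2}\tr_{\cH_S}[\rho\,E(j_1,j_2)]=\tr_{\cH_S}[\rho\,E_1(j_1)]$). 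These $\cP_i$ need not be minimal dilations of the individual $\cM_i$, but the theorem only requires \emph{some} common extension.

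For the ``if'' direction, suppose there is a common $\cH_A$, a state $\sigma_A$, and commuting PVMs $\cP_1=\{P_1(j_1)\}$, $\cP_2=\{P_2(j_2)\}$ on $\cH_S\otimes\cH_A$ that are Naimark extensions of $\cM_1$ and $\cM_2$ with respect to $\sigma_A$. Since the PVMs commute, Lemma~\ref{lem:pvm} supplies the joint PVM $\cP=\{P(j_1,j_2):=P_1(j_1)P_2(j_2)\}$ on $\cH_S\otimes\cH_A$. I would then ``pull $\cP$ back'' to the system by defining effects $E(j_1,j_2)$ on $\cH_S$ through $\tr_{\cH_S}[\rho\,E(j_1,j_2)]:=\tr_{\cH_S\otimes\cH_A}[P(j_1,j_2)(\rho\otimes\sigma_A)]$ for all $\rho$; a standard computation shows $\cM=\{E(j_1,j_2)\}$ is a valid POVM on $\cH_S$ (positivity since $P(j_1,j_2)\ge0$, normalisation since $\sum_{j_1,j_2}P(j_1,j_2)=I_{\cH_S\otimes\cH_A}$ and $\tr\sigma_A=1$). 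Because $\sum_{j_2}P(j_1,j_2)=P_1(j_1)$ and $\cP_1$ is a Naimark extension of $\cM_1$ w.r.t.\ $\sigma_A$, summing over $j_2$ forces $\sum_{j_2}E(j_1,j_2)=E_1(j_1)$, and symmetrically $\sum_{j_1}E(j_1,j_2)=E_2(j_2)$. Hence $\cM$ is a joint measurement of $\cM_1$ and $\cM_2$, i.e.\ they are compatible.

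Most of the argument is routine bookkeeping — verifying that coarse-grainings of a PVM are PVMs and that the pulled-back effects form a POVM. The one step that genuinely needs care, and which I expect to be the crux, is the requirement that \emph{one and the same} fiducial state $\sigma_A$ realise both Naimark extensions: it is precisely this that lets the joint PVM on $\cH_S\otimes\cH_A$ descend to a joint POVM with the correct marginals. Were one instead to demand only that the two extensions commute on a shared ancilla space while allowing distinct ancilla states, the ``if'' direction would break down — and this is exactly the subtlety that obstructs a naive extension of the statement to $N>2$ POVMs.
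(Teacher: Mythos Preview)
Your proposal is correct and follows essentially the same route as the paper: dilate the joint POVM to a PVM, coarse-grain to obtain commuting Naimark extensions with the same fiducial state, and for the converse use Lemma~\ref{lem:pvm} to build the joint PVM and pull it back through the common $\sigma_A$. Your explicit identification $P_1(j_1)P_2(j_2)=P(j_1,j_2)$ and your remark that the shared ancilla state is the crux are exactly the points the paper makes.
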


\begin{proof}
Consider a pair of POVMs $\cM_1=\{E_{1}(j_{1})\, \vert \, j_{1} \in [1, n_{1}] \}$ and $\cM_{2}=\{E_{2}(j_{2}) \, \vert \, j_{2} \in [1,n_{2}]\}$ that are compatible, where we use the notation $[1,n_{i}]$ to denote the set $\{1,2,\ldots,n_{i}\}$. By definition, there exists a joint POVM $\cM=\{E (j_{1},j_{2})\}$ such that $\sum_{j_{2}}E(j_{1},j_{2})=E_{1}(j_{2})$ and $\sum_{j_{1}}E(j_{1},j_{2}) = E_{2}(j_{1})$, for all $j_{1}, j_{2}$. Let $\cP=\{P (j_{1},j_{2})\}$ be the projection-valued operation obtained via a Naimark extension of $\cM$ using the fiducial ancilla state $\sigma_{A} \in \cB(\cH_{A})$, so that, for all $j_{1} \in [1,n_{1}]$, $j_{2} \in [1,n_{2}]$,
\begin{equation}
\tr_{\cH_{S}\otimes \cH_{A}}\left[ P(j_{1},j_{2}) (\rho\otimes \sigma_{A}) \right] =\tr_{\cH_{S}} [ \rho E(j_{1},j_{2}) ].
\end{equation}
Now we note that,
\begin{eqnarray}
&& \tr_{\cH_{S}\otimes \cH_{A}} \left[ \sum_{j_{2}=1}^{n_{2}} P(j_{1},j_{2}) (\rho\otimes \sigma_{A}) \right] \nonumber \\
&=& \tr_{\cH_{S}} \left[ \rho \sum_{j_{2}=1}^{n_{2}}  E( j_{1},j_{2} ) \right]  = \tr_{\cH_{S}} [\rho E_{1}(j_{1}) ]. \nonumber
\end{eqnarray}
Thus, $\cP_{1} = \{ P_{1}(j_{1}) = \sum_{j_{2}=1}^{n_{2}} P (j_{1},j_{2}) \}$ is the PVM obtained using a Naimark extension of $\cM_{1}$, with the same fiducial ancilla state $\sigma_{A} \in \cB(\cH_{A})$. Similarly, $\cP_2 = \{P_{2}(j_{2}) = \sum_{j_{1}=1}^{n_{1}} P(j_{1},j_{2})\}$ is PVM obtained via a Naimark extension of $\cM_{2}$, using the same ancilla state $\sigma_{A} \in \cB(\cH_{A})$, as shown below.
\begin{eqnarray}
&& \tr_{\cH_{S}\otimes \cH_{A}}\left[ \sum_{j_{1}=1}^{n_{1}} P ( j_{1}, j_{2} ) (\rho\otimes\sigma_{A}) \right] \nonumber \\
&=& \tr_{\cH_{S}} [ \rho\sum_{j_{1}=1}^{n_{1}} E (j_{1},j_{2}) ] =\tr_{\cH_{S}}[ \rho E_{2}(j_{2}) ]. \nonumber
\end{eqnarray}
Since $\cP_2 = \{ P_{2}(j_{2}) = \sum_{j_{1}}P_(j_{2},j_{2}) \}$ and $\cP_{1} =\{ P_{1}(j_{1}) = \sum_{j_{2}} P (j_{1}, j_{2})\}$, the two PVMs $\cP_{1}$ and $\cP_{2}$ are compatible, via the joint observable $\cP=\{P (j_{1},j_{2})\}$. Hence, by Lemma~\ref{lem:pvm} they commute.

We now prove the converse. Suppose two POVMs $\cM_1= \{E_{1} (j_{1}) \}$ and $\cM_{2} = \{E_{2}(j_{2})\}$ can be extended to a pair of commuting PVMs $\cP_{1} = \{P_{1}(j_{1})\}$ and $\cP_{2} = \{P_{2}(j_{2})\}$ respectively, via Naimark extensions using the same ancilla state $ \sigma_{A} \in \cB(\cH_{A})$. By Lemma~\ref{lem:pvm}, there exists a PVM $\cP=\{P (j_{1},j_{2})\}$ such that $\cP_{2} \sim \{P_{2}(j_{2})=\sum_{j_{1}} P(j_{1},j_{2})\}$ and $\cP_{1} \sim \{P_{1}(j_{1})=\sum_{j_{2}}P (j_{1},j_{2}) \}$. Now, we construct the POVM $\cM$ whose effects are obtained as
\begin{equation}
\tr_{\cH_{S}\otimes\cH_{A}}\left[ (\rho\otimes\sigma_{A}) P(j_{1},j_{2}) \right]=\tr_{\cH_{S}}[ \rho M (j_{1},j_{2}) ]. 
\end{equation}
It is easy to check that $\cM$ is indeed the joint measurement for the POVMs $\cM_{1}$ and $\cM_{2}$. Specifically,
\begin{eqnarray}
\tr_{\cH_{S}}[ \rho E_{1}(j_{1})] &=& \tr_{\cH_{S}\otimes \cH_{A}} [ (\rho\otimes \sigma_{A}) P_{1}(j_{1}) ]\nonumber\\
&=& \tr_{\cH_{S}\otimes \cH_{A}}[ (\rho\otimes\sigma_{A}) \sum_{j_{2}}P(j_{1},j_{2}) ] \nonumber \\
&=& \sum_{j_{2}}\tr_{\cH_{S}\otimes \cH_{A}} \left[ (\rho\otimes\sigma_{A}) P(j_{1},j_{2}) \right] \nonumber \\
&=& \sum_{j_{2}} \tr_{\cH_{S}}[ \rho E(j_{1},j_{2}) ], \nonumber
\end{eqnarray}
and,
\begin{eqnarray}
\tr_{\cH_{S}}[ \rho E_{2}(j_{2}) ] &=& \tr_{\cH_{S}\otimes \cH_{A}}[ (\rho\otimes\sigma_{A}) P_{2}(j_{2})]\nonumber\\
&=& \tr_{\cH_{S}\otimes \cH_{A}}[ (\rho\otimes\sigma_{A})  \sum_{j_{1}} P(j_{1},j_{2}) ] \nonumber \\
&=& \sum_{j_{1}}\tr_{\cH_{S}\otimes \cH_{A}}[ (\rho\otimes\sigma_{A}) P(j_{1},j_{2}) ] \nonumber \\
&=& \sum_{j_{1}} \tr_{\cH_{S}}[ \rho E(j_{1},j_{2}) ] . \nonumber
\end{eqnarray}
Hence, $\cM_{1} =\{E_{1}(j_{1})\}$ and $\cM_{2}=\{E_{2}(j_{2})\}$ are compatible, with the joint POVM given by $\cM=\{E(j_{1},j_{2}) \}$.
\end{proof}

We note here that one crucial step in the necessary and
sufficient condition for the compatibility of a pair of
POVMs is the existence of a single Naimark extension -- characterised by a specific ancilla space $\cH_{A}$ and a fixed
ancilla state $\sigma_{A}$ -- that extends both the POVMs $\cM_{1}$ and $\cM_{2}$ to a pair of commuting PVMs. 

\section{Incompatibility of a set of $N \geq 2$ quantum measurements}\label{sec:Npovms}

We are now ready to state and prove our first result on the incompatibility of a set of POVMs. Specifically, we consider a set of $N$ POVMs, denoted as $\chi = \{\cM_{i} \, \vert\,  i=1, 2, \ldots, N\}$, acting on a Hilbert space $\cH_{S}$. Let 
\[ \{E_i(j_i) \, \vert \, I_{\cH_{S}} \geq E_i(j_i)\geq 0 \} ,\]
denote the effects corresponding to the POVM $\cM_{i}$, $\forall i \in \{1,\ldots,N\}$, such that,
\[ \sum_{j_i =1}^{n_{1}} E_i (j_i) = I_{\cH_{S}}, E_i(j_i)\in \cB(\cH_{S})~\forall j_i \in [1,n_{i}]. \] 

\begin{theorem}\label{thm:N_povm}
The set $\chi = \{\cM_{i} \, \vert \, i \in [1,N]\}$ is compatible iff there exists a single Naimark extension -- characterised by a specific ancilla Hilbert space $\cH_{A}$ and a fixed ancilla state $\sigma_{A}$ -- by means of which the POVMs $\{\cM_{i}\}$ in the set $\chi$ can be realised as a set of pairwise commuting PVMs on the extended Hilbert space $\cH_{S} \otimes \cH_{A}$. 
\end{theorem}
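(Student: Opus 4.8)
The plan is to follow the two-POVM argument of Theorem~\ref{thm:2povms} almost verbatim, but with the pair label $(j_1,j_2)$ promoted to a multi-index $\vec{j} = (j_1,\ldots,j_N)$ and with Lemma~\ref{lem:pvm} invoked in its general $N$-PVM form. Accordingly I would establish the two implications separately.

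For the forward implication, assume $\chi$ is compatible, with joint POVM $\cM = \{E(\vec{j})\}$ whose marginals satisfy $E_i(j_i) = \sum_{j_k:\, k\neq i} E(\vec{j})$ for every $i$. First I would apply Naimark's theorem to the \emph{single} POVM $\cM$, obtaining a PVM $\cP = \{P(\vec{j})\}$ on a fixed extension $\cH_S\otimes\cH_A$ with a fixed ancilla state $\sigma_A$, so that $\tr_{\cH_S}[\rho E(\vec{j})] = \tr_{\cH_S\otimes\cH_A}[P(\vec{j})(\rho\otimes\sigma_A)]$ for all $\rho$. Then I would set $P_i(j_i) := \sum_{j_k:\, k\neq i} P(\vec{j})$ and verify three things: (i) each $\cP_i = \{P_i(j_i)\}$ is a PVM, because the $P(\vec{j})$ are mutually orthogonal projections (being the effects of the PVM $\cP$), so each $P_i(j_i)$ is again a projection and $\sum_{j_i}P_i(j_i) = I_{\cH_S\otimes\cH_A}$; (ii) the family $\{\cP_i\}$ is compatible, since every $\cP_i$ is a marginal of the one PVM $\cP$, whence by Lemma~\ref{lem:pvm} the $\cP_i$ pairwise commute; (iii) summing the Naimark identity over $j_k$, $k\neq i$, gives $\tr_{\cH_S\otimes\cH_A}[P_i(j_i)(\rho\otimes\sigma_A)] = \tr_{\cH_S}[\rho E_i(j_i)]$, i.e. each $\cP_i$ is a Naimark extension of $\cM_i$ realised on the \emph{same} $\cH_A$ with the \emph{same} $\sigma_A$. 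This is exactly the asserted single Naimark extension.

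For the converse, assume there is a single Naimark extension: a fixed $\cH_A$ and $\sigma_A$ together with pairwise-commuting PVMs $\cP_i = \{P_i(j_i)\}$ on $\cH_S\otimes\cH_A$ such that $\tr_{\cH_S}[\rho E_i(j_i)] = \tr_{\cH_S\otimes\cH_A}[P_i(j_i)(\rho\otimes\sigma_A)]$ for every $i$. Since the $\cP_i$ pairwise commute, Lemma~\ref{lem:pvm} produces a joint PVM $\cP = \{P(\vec{j})\}$ with $P_i(j_i) = \sum_{j_k:\, k\neq i}P(\vec{j})$. I would then define a candidate joint measurement $\cM = \{E(\vec{j})\}$ on $\cH_S$ by the trace duality $\tr_{\cH_S}[\rho E(\vec{j})] = \tr_{\cH_S\otimes\cH_A}[P(\vec{j})(\rho\otimes\sigma_A)]$, equivalently $E(\vec{j}) = \tr_{\cH_A}[(I_{\cH_S}\otimes\sigma_A)P(\vec{j})]$. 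Positivity of $\sigma_A$ makes each $E(\vec{j})$ a legitimate effect, and $\sum_{\vec{j}}E(\vec{j}) = \tr_{\cH_A}[(I_{\cH_S}\otimes\sigma_A)I_{\cH_S\otimes\cH_A}] = I_{\cH_S}$; moreover, summing over $j_k$, $k\neq i$, and using $P_i(j_i) = \sum_{j_k:\,k\neq i}P(\vec{j})$ together with the Naimark identity for $\cM_i$ gives $\sum_{j_k:\,k\neq i}E(\vec{j}) = E_i(j_i)$. Hence $\cM$ is a joint POVM for $\chi$, and $\chi$ is compatible.

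The step that carries the weight --- and the reason the statement is phrased in terms of a \emph{single} Naimark extension --- is the use of Lemma~\ref{lem:pvm} in the converse: one can pass from the individual dilations $\cP_i$ to a common joint PVM $\cP$ only because all the $\cP_i$ act on one and the same space $\cH_S\otimes\cH_A$, and the subsequent back-conversion to a joint POVM on $\cH_S$ is consistent only because one and the same fiducial state $\sigma_A$ enters the trace duality for every $i$. Were the $\cM_i$ dilated on different ancilla spaces or with different fiducial states, there would be neither a common PVM to coarse-grain nor a single $\sigma_A$ to dualize against --- precisely the obstruction responsible for the failure of the converse for unconstrained Naimark extensions. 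The remaining ingredients --- mutual orthogonality and completeness of the coarse-grained projections, positivity and normalization of the $E(\vec{j})$, and the marginal identities --- are the same routine computations already displayed in the proof of Theorem~\ref{thm:2povms}.
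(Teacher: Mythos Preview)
Your proposal is correct and follows essentially the same approach as the paper: dilate the joint POVM and take marginals for the forward direction, and for the converse use Lemma~\ref{lem:pvm} to obtain a joint PVM and then compress against the fixed ancilla state to produce the joint POVM. The only difference is expository: the paper's converse first allows distinct ancilla states $\sigma_A^{i}$ and then isolates the case $\sigma_A^{1}=\cdots=\sigma_A^{N}$ to explain \emph{why} a single fiducial state is required, whereas you assume the single $\sigma_A$ from the outset (as the theorem statement permits) and relegate that discussion to your final paragraph.
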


\begin{proof}

We first show the sufficient condition that if the set $\chi$ of $N$ POVMs $\{\cM_{i} \vert \, i=1,2,\ldots, N\}$ is compatible, it can be realised as a set of pairwise commuting PVMs on an extended Hilbert space. 
By definition, there exists a joint POVM $\cM$ with effects 
\[ \{ E(j_{1}, j_{2}, \ldots, j_{N}) \, \vert \, I_{\cH_{S}} \geq E(j_{1}, j_{2}, \ldots, j_{N}) \geq 0 \}, \] 
satisfying $\sum_{j_{1}, j_{2}, \ldots j_{N}} E(j_{1}, j_{2}, \ldots, j_{N}) = I_{\cH_{S}}$. The effects of the individual POVMs $\cM_{i}$ are indeed obtained as the marginals of the effects of the joint measurement as follows.
\begin{eqnarray}
E_{i}(j_{i}) &=& \sum_{j_{1}=1}^{n_{1}}\ldots\sum_{j_{i-1}=1}^{n_{i-1}}\sum_{j_{i+1}=1}^{n_{i+1}}\ldots\sum_{j_{N}=1}^{n_{N}} E(j_{1}, \ldots, j_{N}) , \nonumber \\
&& \; \; \forall \; j_{i} \in \{1,2,\ldots,n_{i}\}, \; i \in \{1,2,\ldots,N\}.   \label{eq:povm_marginal}
\end{eqnarray}

By Naimark dilation theorem, we can find a projective measurement $\cP $ acting on the extended space $ \cH_{S}\otimes \cH_{A} $, characterized by the set of projectors $\{ P(j_{1}, j_{2}, \ldots, j_{N})\}$. For all $j_{i} \in [1, n_{i}]$, the elements of the PVM $\cP$ are related to the effects of the POVM $\cM$ via a fixed ancilla state $\sigma_{A} \in \cB(\cH_{A})$, as,
\begin{eqnarray}
&& \tr_{\cH_{S}\otimes \cH_{A}} [ P (j_{1}, j_{2}, \ldots, j_{N}) (\rho\otimes\sigma_{A})]  \nonumber \\
&=&  \tr_{\cH_{S}} [ E(j_{1}, j_{2}, \ldots, j_{N}) \rho ], \; \; \forall  \, \rho \in \cB(\cH_{S}). \label{eq:naimark}
\end{eqnarray}
Note that the projective elements of the measurement $\cP$ satisfy $\sum_{j_{1}, j_{2}, \ldots, j_{N}} P(j_{1}, j_{2}, \ldots, j_{N}) = I_{\cH_{S}\otimes\cH_{A}}$ and $P(j_{1}, j_{2}, \ldots, j_{N})P(k_{1}, k_{2}, \ldots, k_{N}) = \delta_{j_{1}k_{1}}\delta_{j_{2}k_{2}}\ldots\delta_{j_{N}k_{N}}P(j_{1}, j_{2}, \ldots, j_{N})$
Now we define operators $\{P_{i}(j_{i)}\}$ to be the marginals of the projectors $P (j_{1}, j_{2}, \ldots, j_{N})$, obtained as,
\begin{eqnarray}
&& P_{i}(j_{i}) \nonumber \\
&=&  \sum_{j_{1}=1}^{n_{1}}\ldots\sum_{j_{i-1}=1}^{n_{i-1}}\sum_{j_{i+1}=1}^{n_{i+1}}\ldots\sum_{j_{N}=1}^{n_{N}} P (j_{1}, j_{2}, \ldots, j_{N}),  \nonumber \\
&& \quad \quad \forall j_{i} \; \in \{1,2,\ldots,n_{i}\}, \; i \in \{1,2,\ldots,N\}. \label{eq:pvm1}
\end{eqnarray}
We first note that, 
\begin{equation}
\sum_{j_{i}=1}^{n_{i}} P_{i}(j_{i})  = \sum_{j_{1}, j_{2}, \ldots , j_{N}} P (j_{1}, j_{2}, \ldots, j_{N}) = I_{\cH_{S}\otimes \cH_{A}}, \label{eq:proj}
\end{equation}
for every $i=1,2,\ldots, N$. Furthermore, it is easy to see that the operators $\{P_{i}(j_{i})\}$ constitute a set of pairwise orthogonal projectors.
\begin{eqnarray}
&& P_{i}(j_{i})P_{l}(k_{l}) \nonumber \\
&=& \sum_{ \{j_{m}\}\setminus j_{i}} P (j_{1}, \ldots, j_{N}) \sum_{\{k_{r}\} \setminus k_{l}} P (k_{1}, \ldots, k_{N}) \nonumber \\
&=& \sum_{ \{j_{m}\}\setminus j_{i},\{k_{r}\} \setminus k_{l} } \delta_{j_{1}k_{1}}\ldots \delta_{j_{N}k_{N}} \, P (j_{1}, \ldots, j_{N}) \nonumber \\
&=& \delta_{ij}\delta_{j_{i}k_{l}}P_{i}(j_{i}),
\end{eqnarray}
where we have used $\sum_{ \{j_{m}\}\setminus j_{i}}$ to denote the sum over all the elements of the set $\{ j_{1}, j_{2}, \ldots, j_{N}\}$, excluding $j_{i}$. 

In other words, for each $i \in [1,N]$, the set of projectors $\cP_{i} = \{P_{i}(j_{i}) \vert j_{i} = 1,2,\ldots, n_{i}\}$, constitutes a projective measurement on the extended space $\cH_{S}\otimes\cH_{A}$. The fact that the operators $\{P_{i}(j_{i})\}$ are pairwise orthogonal implies that the corresponding PVMs $\{\cP_{i} \, \vert \, i=1,2,\ldots, N\}$ are indeed compatible, as noted in Lemma~\ref{lem:pvm}. Finally, we observe that the compatible PVMs $\{\cP_{i}\}$ constructed as marginals of the PVM $\cP$ are indeed obtained as Naimark dilations of the POVMs $\{\cM_{i}\}$, as desired. 
\begin{eqnarray}
&& \tr_{\cH_{S}\otimes \cH_{A}}[ P_{i}(j_{i}) (\rho\otimes \sigma_{A})] \nonumber \\
&=& \sum_{ \{j_{m}\}\setminus j_{i}} \tr_{\cH_{S}\otimes \cH_{A}} [ P (j_{1}, j_{2}, \ldots, j_{N}) (\rho\otimes \sigma_{A}) ] \nonumber \\
&=& \sum_{ \{j_{m}\}\setminus j_{i}} \tr_{\cH_{S}} [ E(j_{1}, j_{2}, \ldots, j_{N}) \rho ] \nonumber \\
&=& \tr_{\cH_{S}} [ E_{i} (j_{i}) \rho ] . 
\end{eqnarray}

We now prove the converse. Suppose the set $\chi = \{\cM_{i}\}$ of POVMs is such that there exists a set $\Pi = \{\cP_{i}\}$ of compatible Naimark extensions acting on the extended  space $\cH_{S}\otimes \cH_{A}$. Compatibility of the PVMs $\{\cP_{i}\}$ implies that they all pairwise commute. In other words, if $\{P_{i}(j_{i})\}$ denotes the set of projectors associated with the Naimark extension $\cP_{i}$, these projectors satisfy, $\forall j_{i} \in \{1,2,\ldots,n_{i}\}$ and $ k_{l} \in \{1,2,\ldots,n_{l}\}$, 
\[ [P_{i}(j_{i}), P_{l}(k_{l}) ] = 0, \, \forall i, l \in [1,N] . \]  
We now need to check that the set $\chi$ is a set of compatible POVMs. 

We first note that compatibility of the PVMs $\Pi$ demands that all the projective measurements $\cP_{1}, \cP_{2}, \ldots, \cP_{N}$ act on the states of one and the same extended Hilbert space $\cH_{S}\otimes \cH_{A}$. Further, for each $i \in [1,N]$, there exists a fixed state $\sigma_{A}^{i} \in \cB(\cH_{A})$ such that, for all $j_{i} \in [1, n_{i}]$,
\[\tr_{\cH_{S}} [E_{i}(j_{i}) \rho] = \tr_{\cH_{S}\otimes \cH_{A}}[P_{i}(j_{i}) \left( \rho \otimes\sigma_{A}^{i}\right)], \, \forall \, \rho \in \cB(\cH_{S}). \]
It is important to note here the states $\{\sigma_{A}^{i}\}$ are, in general, distinct density operators on the ancilla system. 

The fact that the projective measurements $\{\cP_{i}\}$ are compatible, implies that there exists a joint projective measurement $\cP \equiv \{P(j_{1}, \ldots,j_{N})\}$, with projective elements satisfying,
\[\sum_{\{j_{m}\}\setminus j_{i}} P(j_{1}, j_{2}, \ldots, j_{N}) = P_{i}(j_{i}) ,  \]
for $j_{i}\in [1, n_{i}]$, for all $i \in [1,N]$. We then use the elements of this joint PVM on $\cH_{S}\otimes\cH_{A}$ to define a set of operators acting on the space $\cH_{S}$, as,
\begin{equation}
E^{(i)}(j_{1}, j_{2},\ldots, j_{N}) = \tr_{\cH_{A}} [P(j_{1}, \ldots, j_{N}) \sigma_{A}^{i}], \label{eq:joint}
\end{equation}
for $j_{l}\in [1, n_{l}]$ and for all $i \in [1,N]$. Clearly the operators $\{E^{(i)}(j_{1}, j_{2},\ldots, j_{N})\}$ are positive, since, 
\begin{eqnarray}
&& E^{(i)}(j_{1}, j_{2},\ldots, j_{N}) = \tr_{\cH_{A}} [P(j_{1}, \ldots, j_{N}) \sigma_{A}^{i}] \nonumber \\
&=& \tr_{\cH_{A}} [P(j_{1}, \ldots, j_{N}) (I_{\cH_{S}}\otimes\sigma_{A}^{i}) P(j_{1}, \ldots, j_{N})] \nonumber \\
&\geq & 0. \nonumber
\end{eqnarray}
Furthermore, the operators defined in Eq.~\ref{eq:joint} satisfy,


\begin{eqnarray}
&& \sum_{j_{1}, j_{2}, \ldots, j_N} E^{(i)} (j_{1},j_{2}, \dots, j_{N} )  \nonumber \\
&=& \sum_{j_{1}, j_{2}, \ldots, j_{N}}\tr_{\cH_{A}}\left[ (I_{\cH_{S}}\otimes\sigma_{A}^{i}) P ( j_{1}, j_{2}, \ldots, j_{N}) \right] = I_{\cH_{S}}, \nonumber
\end{eqnarray}
where the last equality follows from Eq.~\eqref{eq:proj}. Therefore, for each $i \in [1,N]$, $\cE^{(i)} = \{E^{(i)}( j_{1},j_{2}, \dots, j_{N})\}$ constitutes a valid POVM on the space $\cH_{S}$. 

For any state $\rho \in \cB(\cH_{S})$, the POVM elements $E^{(i)}( j_{1},j_{2}, \dots, j_{N})$ satisfy, 
\begin{eqnarray}
&& \tr_{\cH_{S}}[\rho E^{(i)}(j_{1},\ldots, j_{N} )] \nonumber \\
&=& \tr_{\cH_{S}}[\rho \, (\tr_{\cH_{A}} [\sigma_{A}^{i} P(j_{1}, \ldots, j_{N})]) \, ]\nonumber \\
&=& \tr_{\cH_{S}\otimes \cH_{A}} [ (\rho\otimes\sigma^{i}_{A}) P (j_{1}, \ldots, j_{N})] .\nonumber 
\end{eqnarray}
This implies that the marginals of the POVM elements defined in Eq.~\ref{eq:joint} satisfy,
\begin{eqnarray}
&& \tr_{\cH_{S}} \left[ \sum_{ \{j_{m}\}\setminus j_{l}} E^{(i)}(j_{1},\ldots, j_{N}) \rho \right] \nonumber \\
&=& \tr_{\cH_{S}\otimes \cH_{A}} \left[ (\rho\otimes\sigma^{i}_{A}) \sum_{ \{j_{m}\}\setminus j_{l}} P(j_{1}, \ldots, j_{N}) \right] \nonumber \\
&=& \tr_{\cH_{S}\otimes \cH_{A}} [ (\rho\otimes\sigma^{i}_{A}) P_{l}(j_{l}) ] \nonumber\\
&=& \tr_{\cH_{S}} [ E^{(i)}_{l}(j_{l}) \rho] , \forall \rho \in \cB(\cH_{S}) . \label{eq:marginal1}
\end{eqnarray}
Here we have defined the operators,
\begin{eqnarray}
E^{(i)}_{l}(j_{l}) &=& \tr_{\cH_{A}} [ P_{l}(j_{l}) (I_{\cH_{S}}\otimes \sigma^{i}_{A})], \nonumber \\
&& \; \forall \, i,l \in [1,N], \, j_{l} \in [1, n_{l}]. \label{eq:marginal2}
\end{eqnarray}
It is easy to check that the operators $E^{(i)}_{l}(j_{l}) \geq 0$, for all $i,l \in [1,N], j_{l} \in [1, n_{l}]$. Furthermore, 
\begin{eqnarray}
&& \sum_{j_{l}=1}^{n_{l}} E^{(i)}_{l}(j_{l}) \nonumber \\
&=& \tr_{\cH_{A}} [ \sum_{j_{l}=1}^{n_{l}} P_{l}(j_{l}) (I_{\cH_{S}}\otimes \sigma^{i}_{A})] \nonumber \\
&=& \tr_{\cH_{A}} [I_{\cH_{S}}\otimes \sigma^{i}_{A}] = I_{\cH_{S}} , \, \forall \, i,l\in [1,N].
\end{eqnarray} 
Thus, the set of positive operators $\{E^{(i)}_{l}(j_{l}) \vert j_{l} \in [1,n_{l}] \}$, constitutes a POVM $\cE^{(i)}_{j}$ on $\cH_{S}$, for each $i,l \in [1,N]$. Note that when $i=l$, 
\[ E^{(i)}_{i}(j_{i}) = \tr_{\cH_{A}} [\sigma_{A}^{i} P_{i}(j_{i})] = E_{i}(j_{i}),\]
and we recover the elements of the POVMs in the set $\chi \equiv \{\cM_{i} \}$. 
However, when $i \neq l$, the operators $E^{(i)}_{l}(j_{l})$ are not the same as the POVM elements $\{E_{i}(j_{i})\}$. 

We now consider two cases.
\begin{itemize}
\item[(i)] Suppose the ancilla states $\sigma_{A}^{i}$ are all the same, that is, $\sigma_{A}^{1}=\sigma_{A}^{2} = \ldots = \sigma_{A}^{N} = \sigma_{A}$(say). Then, it follows from the definition in Eq.~\ref{eq:marginal2} that, 
\[ E_{l}^{(1)}(j_{l}) = E_{l}^{(2)}(j_{l}) = \ldots = E_{l}^{(N)}(j_{l}). \] Therefore, $E_{l}^{(i)}(j_{l}) = E_{l}^{(l)}(j_{l})$, independent of $i$. As already noted above, $E_{l}^{(l)}(j_{l}) = E_{l}(j_{l})$ are indeed the POVM elements corresponding to the measurements in the set $\chi = \{\cM_{l} \}$. In other words, the POVM $\cE$ comprising the elements defined in Eq.~\ref{eq:joint} is indeed the joint measurement corresponding to the POVMs in the set $\chi = \{\cM_{i}\}$, thus showing that the set $\chi$ is indeed jointly measurable.
\item [(ii)] Suppose the ancilla states $\sigma_{A}^{i}$ are not the same, but distinct for each $i \in [1,N]$. In this case, it is not necessarily true that the POVMs $\{\cE_{i}\}$ are jointly measurable, although they have compatible Naimark extensions.
\end{itemize}
Thus we have shown that even when there exist of compatible Naimark extensions corresponding to a set of POVMs, the POVMs are compatible \emph{only if} these compatible Naimark extensions are obtained using a single ancilla system and identical density operators on this ancilla system.  
\end{proof}


\begin{remark}
It is worth noting here that related notion of \emph{joint dilatability} of  a set of POVMs was introduced in \cite{jointmeas_graph}. A set of POVMs $\{\cM_{i}, \, i=1,\ldots, N\}$ on a Hilbert space $\cH_{S}$ is said to be jointly dilatable, if there exists an ancillary space $\cH_{A}$ and a \emph{single} isometry $V : \cH_{S} \rightarrow \cH_{S} \otimes \cH_{A}$ and a \emph{single} PVM $\cP$ such that, the POVM effects satisfy $E_{i}(j_{i}) = \sum_{\{j_{k}\}\setminus j_{i}}V^{\dagger}P(j_{1}, j_{2}, \ldots, j_{N}) V $. From the definition, it follows that a set of POVMs is jointly measurable if and only if it is jointly dilatable.

We now show that our result in Theorem~\ref{thm:N_povm} provides a stronger, and arguably more useful, characterization of joint measurability than the observation in~\cite{jointmeas_graph}. To this end, we first recast the notion of joint dilatability in our formalism. Suppose $V$ be an isometry which extends a given POVM effect $E \in {\cal B}({\cal H}_S)$ to a projector $P \in {\cal B}({\cal H}_S \otimes {\cal H}_A)$ in such a way that,  $V^{\dagger}PV=E$. Then,
\begin{align}
\tr(\rho E)&=\tr(V^{\dagger}PV\rho)\nonumber\\
&=\tr(P(U\rho\otimes\sigma U^{\dagger}))\label{uni_delati}
\end{align}
for some unitary  $U\in\cB(\cH_S\otimes\cH_A)$ and ancilla state $\sigma\in\cH_A$. It is easy to see that 
\begin{equation}
\tr(V^{\dagger}PV\rho)=\tr(P(\rho\otimes\sigma)),
\end{equation}
if and only if $[P,U]=0$. From the proof of Theorem~\eqref{thm:N_povm}, it is clear that in our case, we restrict ourselves to those isometries for which the corresponding unitary commutes with the projectors on extended Hilbert space and obtain the if and only if result on joint measurability. The fact that we use this restricted set makes the result in Theorem~\eqref{thm:N_povm} a stronger statement. Experimentally, this means that to implement the joint POVM of set of compatible observable, we just have to add an ancilla and implement a suitable PVM on it; we do not need to implement any joint unitary before the implementation of that PVM.
\end{remark}

While we have used the tensor product form for the Naimark extensions in proving our result, a similar argument works for Naimark extensions which are constructed by using the direct sum. From the proof of the necessary part of Theorem~\eqref{thm:N_povm}, we know that commutativity of any kind of Naimark extension implies compatibility, and, from the proof of sufficiency we get that for all compatible set of observables we always get a commuting \emph{canonical} (tensor product) Naimark extension. So, we conclude that for any set of observables, if a commuting common direct sum Naimark extension exists, then a commuting common canonical Naimark extension also exists. Therefore in some cases we can restrict our attention to the canonical Naimark extensions. For example, only the canonical form of the Naimark extension has been considered in \eqref{uni_delati} above.

In what follows, we will discuss a few examples to highlight the central aspect of our result, namely, the connection between compatibility and the existence of a single common Naimark extension.

%

\subsection{Examples}\label{sec:example}

The existence of a \emph{single common} Naimark extension -- characterised by a specific ancilla Hilbert space and a fixed ancilla state -- that extends all the POVMs in the set $\cX$ to a set of commuting PVMs plays a crucial role in the proof of our necessary and sufficient condition for the compatibility of a set of POVMs. Here, we discuss three concrete examples of quantum measurements in $d=2$ to elucidate this aspect further, each individual example having its own significance.

%


\begin{example}[Commuting Naimark extensions for incompatible observables]
Consider the spin measurement along the $x$-direction of the Bloch sphere, characterized by the projectors $A_x=\{\ket{0}\bra{0},\ket{1}\bra{1}\}$ on a single-qubit space $\cH_{S}$. One choice of Naimark extension of this observable is $\cP_x = \{P_x, I-P_x \}$, where $P_x$ is a projector on a $4$-dimensional space $\cH_{S}\otimes\cH_{A}$, given by $P_{x}  =\ket{00}\bra{00}+\ket{+1}\bra{+1}$. This Naimark dilation can be obtained by setting the ancilla system to the fiducial state $|0\rangle\langle 0|$, as seen below. 
\[ \tr_{\cH_{S}\otimes\cH_{A}} [ (\rho\otimes\ket{0}\bra{0} ) P_x ] =\tr_{\cH_{S}} [ \rho \ket{0}\bra{0}].\]
However, if we set the ancilla state to $\ket{1}\bra{1}$, we see that,
\[ \tr_{\cH_{S}\otimes\cH_{A}} [ (\rho\otimes\ket{1}\bra{1}) P_x ] = \tr_{\cH_{S}} [\rho \ket{+}\bra{+}].\]
Thus, $\cP_x$ is also a Naimark extension of the spin measurement along the $y$-direction, characterized by the projectors $A_y= \{\ket{+}\bra{+},\ket{-}\bra{-}\}$ if we assume the ancilla state to be $\ket{1}\bra{1}$. Since $A_{x}$ and $A_{y}$ have a common Naimark extension, clearly they admit commuting Naimark extensions, although $A_x$ and $A_y$ are not compatible. Therefore it is clear that Naimark extensions using different ancillas cannot be used to characterize compatibility or incompatibility.
\end{example}

Our next example shows that if we use same ancilla states for all observables one can indeed guarantee commutativity from compatibility and vice-versa. Here the principal system is a two-dimensional Hilbert space, whereas the ancilla is eight dimensional. 

\begin{example}[Unsharp spin observables]
Consider the unsharp versions of the three observables $\sigma_x$, $\sigma_y$, $\sigma_z$ and with $\lambda$ denoting the unsharpness parameter. The POVM effects of these unsharp observables are,
\begin{eqnarray}
\Pi_{s}(1) &=& \lambda \ket{+}\bra{+} +(1-\lambda)\frac{I}{2}=\frac{\Id+\lambda \sigma_{s}}{2} \nonumber \\
\Pi_{s}(2) &=& \lambda \ket{-}\bra{-} +(1-\lambda)\frac{I}{2}=\frac{\Id-\lambda \sigma_{s}}{2},
\end{eqnarray}
where $s\in\{x,y,z\}$. 
\end{example}
We know that the POVMs $\cM_{s} = \{\Pi_{s}(1), \Pi_{s}(2)\}$ are compatible for $0\leq \lambda \leq \frac{1}{\sqrt{3}}$ \cite{Busch}. Furthermore, the joint POVM $\Pi=\{\Pi (i,j,k)\}$, $i,j,k \in \{1,2\}$ of these unsharp observables can be written down as,
\begin{eqnarray}
\Pi (i,j,k) &=& \frac{\Id+(-1)^{i+1}\lambda\sigma_x+(-1)^{j+1}\lambda\sigma_y+(-1)^{k+1}\lambda\sigma_z}{8}\nonumber\\
&=& \frac{\Id+\sqrt{3}\lambda (\vec{m}_{ijk}.\vec{\sigma})}{8},
\end{eqnarray}
where,
\begin{eqnarray}
 \vec{m}_{111} &=& \frac{1}{\sqrt{3}}(1,1,1) = -\vec{m}_{222} , \nonumber \\
 \vec{m}_{112} &=& \frac{1}{\sqrt{3}}(1,1,-1) = -\vec{m}_{221}, \nonumber \\
 \vec{m}_{121} &=& \frac{1}{\sqrt{3}}(1,-1,1) = -\vec{m}_{212}, \nonumber \\
 \vec{m}_{211} &=& \frac{1}{\sqrt{3}}(-1,1,1) = - \vec{m}_{122} . \nonumber 
 \end{eqnarray}
It is easy to check that $\sum_{ijk}\Pi (i, j, k)= I_{\cH_{S}} $. For all $i,j,k=\{1,2\}$, the operators $\{\Pi (i,j,k)\}$ are simultaneously positive for $0 \leq \lambda \leq \frac{1}{\sqrt{3}} $ and we recover the unsharp spin-$\frac{1}{2}$ observables as the marginals $\Pi_{x}(i)=\sum_{jk}\Pi (i,j,k)$, $\Pi_{y}(j)=\sum_{ik}\Pi_(i,j,k)$, and $\Pi_{z}(k)=\sum_{ij}\Pi (i,j,k)$. 

We now construct the Neimark extensions corresponding to these unsharp spin observables following the approach in~\cite{coherence_povm}. We start by observing that the square-root operators corresponding to the POVM elements $\Pi(i,j,k)$, can be written as,
%
%
%
%
%
%
%

\begin{eqnarray}
\sqrt{\Pi (i, j, k)} &=& \sqrt{\frac{1+\sqrt{3}\lambda}{8}}\ket{\vec{m}_{ijk},+}\bra{\vec{m}_{ijk},+} \nonumber\\
&+& \sqrt{\frac{1-\sqrt{3}\lambda}{8}}\ket{\vec{m}_{ijk},-}\bra{\vec{m}_{ijk},-}. 
\end{eqnarray}
Next we define $A_0=U\sqrt{\Pi (1,1,1)}$, $A_1=U\sqrt{\Pi (1,1,2)}$, $A_2=U\sqrt{\Pi (1,2,1)}$, $A_3=U\sqrt{\Pi (2,1,1)}$, $A_4=U\sqrt{\Pi(1,2,2)}$, $A_5=U\sqrt{\Pi(2,1,2)}$, $A_6=U\sqrt{\Pi(2,2,1)}$, $A_7=U\sqrt{\Pi(2,2,2)}$, where $U$ can be an arbitrary unitary matrix on the system Hilbert space $\cH_{S}$. Without loss of generality, we may take $U=I_{\cH_{S}}$.  Now we construct an isometry acting on the extended space $\cH_{S}\otimes \cH_{A}$, as described in~\cite{coherence_povm}, 
\begin{equation}
\tilde{V}=\sum_{m=0}^{7} (A_{m})_{\cH_{S}} \otimes (\ket{m}\bra{0}) _{\cH_{A}}.
\end{equation}
Note that $\cH_A$ is $8$-dimensional and $\{\ket{i}_{\cH_A}\}$ forms a basis for $\cH_A$. This isometry can be extended to a unitary $V$ on the Hilbert space $\cH_S \otimes \cH_A$ with the following form.
\begin{equation}
V=\sum_{m,a} (A_{m,a}) _{\cH_{S}}\otimes (\ket{m}\bra{a})_{\cH_{A}} . 
\end{equation}
with $A_{m,0}=A_m$. Unitarity implies that the operators $A_{m,a}$ have to satisfy,
\begin{equation}
\sum_{m=0}^{7} A^{\dagger}_{m,a}A_{m,b}=\delta_{ab} I_{\cH_{S}}, \; \sum_{a} A_{i,a}A^{\dagger}_{j,a}=\delta_{ij} I_{\cH_{S}}.
\end{equation}
The Naimark extension of the POVM $\Pi$ is characterized by the projectors,
\begin{equation}
P(i) = V^\dagger ( I\otimes\ket{i}\bra{i}) V, \, \forall \, i\in [0,7].
\end{equation}
Relabelling the projectors $P(0) = P(1,1,1)$, $P(1)=P(1,1,2)$, $P(2)=P(1,2,1)$, $P(3)=P(2,1,1)$, $P(4)=P(1,2,2)$, $P(5)=P(2,1,2)$, $P(6)=P(2,2,1)$, $P(7)=Pi(2,2,2)$, we have,
\begin{equation}
\pi P(i,j,k) \pi=\Pi (i,j,k) \ket{0}\bra{0}
\end{equation}
where $\pi = I_{\cH_{S}}\otimes(\ket{0}\bra{0})_{\cH_{A}}$. This in turn confirms that the projective measurement characterized by the set $\{P(i,j,k)\}$ does indeed correspond to a valid Naimark extension of the POVM with elements $\{\Pi(i,j,k)\}$. 
\begin{equation}
\tr [ (\rho\otimes\ket{0}\bra{0}) P(i,j,k) ] = \tr [\rho\Pi (i,j,k)]
\end{equation}
If we now consider the marginals of the projectors, we get,
\begin{eqnarray}
\tr [ (\rho\otimes\ket{0}\bra{0})\sum_{jk}P(i,j,k)] = \tr [\rho\Pi_{x}(i)], \nonumber \\
\tr[ (\rho\otimes\ket{0}\bra{0})\sum_{ik}P (i,j,k) ] = \tr [\rho\Pi_{y}(j) ], \nonumber \\
\tr [ (\rho\otimes\ket{0}\bra{0}) \sum_{ij}P(i,j,k)]= \tr [\rho\Pi_{z}(k)].
\end{eqnarray}
In other words, the set of projectors $\{P_{x}(i)=\sum_{jk}P(i,j,k) \, | \, i\in\{1,2\}\}$ correspond to the Naimark extension of $\{\Pi_{x}(i)\}$,  $\{P_{y}(j)=\sum_{ik}P(i,j,k)\}$ is the Naimark extension of $\{\Pi_{y}(j)\}$ and $\{P_{z}(k)=\sum_{ij}P(i,j,k)\}$ is the Naimark extension of $\{\Pi_{z}(k)\}$. As $\{P_{x}(i)\}$, $\{P_{y}(j)\}$ and $\{P_{z}(k)\}$ have a joint PVM $P(i,j,k)$, they are compatible and therefore pairwise commuting.

%

\begin{example}[Non-commuting Naimark extensions for compatible observables]\label{ex:non_com}
Let us consider the following two unsharp spin-$\frac{1}{2}$ observables $A=\{A(1)=\sigma_x(+,\lambda)=\frac{I+\lambda \sigma_x}{2},A(2)=\sigma_x(-,\lambda)=\frac{I-\lambda \sigma_x}{2}\}$ and $B=\{B(1)=\sigma_y(+,\lambda)=\frac{I +\lambda \sigma_y}{2}, B(2)=\sigma_y(-,\lambda)=\frac{I -\lambda \sigma_y}{2}\}$. We know that for $\lambda\leq \frac{1}{\sqrt{2}}$, those two pairs are compatible. So, theorem\eqref{thm:N_povm} states that there exists a commuting common Naimark extension. But do there exist common Naimark extensions of the observables $A$ and $B$ (for $\lambda \le 1/{\sqrt{2}}$) which do not commute? We show below that indeed there exist such Naimark extensions.
\end{example}

Consider the following Naimark extension of the effect $A(1)$ with respect to a two-dimensional ancilla Hilbert space, with  the ancilla state set to $\ket{0}\bra{0}$.
\begin{align}
P(1) &=\sigma_x(+,\lambda)\otimes \ket{0}\bra{0}+\frac{\sqrt{1-\lambda^2}}{2} I \otimes \ket{0}\bra{1}\nonumber\\
&\frac{\sqrt{1-\lambda^2}}{2} I \otimes \ket{1}\bra{0}+\sigma_x(-,\lambda)\otimes \ket{1}\bra{1}.
\end{align}
Note that we have used the Naimark construction described in Appendix~\ref{sec:naimark} with the unitary $U$ in Eq.~\eqref{Naimark} set to the identity operator on the system space. This guarantees that the operator $P(1)$ is indeed a projector on the system Hilbert space. The projector corresponding to the second effect $A(2)$ is obtained as $P(2) = I_{4\otimes 4} - P(1)$. 

Similarly, the Naimark extension of the effect $B(1)$ with respect to the same two dimensional ancilla state $\ket{0}\bra{0}$ can be obtained as,
\begin{align}
Q(1) &=\sigma_y(+,\lambda)\otimes \ket{0}\bra{0}+\frac{\sqrt{1-\lambda^2}}{2} I \otimes \ket{0}\bra{1}\nonumber\\
&+\frac{\sqrt{1-\lambda^2}}{2} I \otimes \ket{1}\bra{0}+\sigma_y(-,\lambda)\otimes \ket{1}\bra{1}.
\end{align}
The corresponding Naimark extension of $B(2)$ is given by, $Q(2) = I_{4 \times 4} - Q(1)$.

Now, one can easily check that,
\begin{eqnarray}
&& [P(1),Q(1)] \nonumber \\
&=& \frac{\lambda^2}{2}i\sigma_z\otimes\ket{0}\bra{0}+\frac{\lambda\sqrt{1-\lambda^2}}{2}(\sigma_x-\sigma_y)\otimes\ket{0}\bra{1}\nonumber\\
&+& \frac{\lambda\sqrt{1-\lambda^2}}{2}(\sigma_y-\sigma_x)\otimes\ket{1}\bra{0}+\frac{\lambda^2}{2}i\sigma_z\otimes\ket{1}\bra{1}\nonumber\\
&\Rightarrow & [P(1),Q(1)] \neq 0 ,
\end{eqnarray}
for all $\lambda$ satisfying $0 < \lambda \le 1$. 

It can be easily verified here that $Tr[{\rho}_SA(i)] = Tr[({\rho}_S \otimes |0{\rangle}_A{\langle}0|)P(i)]$ and $Tr[{\rho}_S B(i)] = Tr[({\rho}_S \otimes |0{\rangle}_A{\langle}0|)Q(i)]$ for $i = 1, 2$. Here the subscript $S$ refers to the system and $A$ refers to the ancilla.

Thus we see that, although the set $\chi = \{A, B\}$ of two dichotomic POVMs $A$ and $B$ for a spin-1/2 particle is compatible for $0 \leq \lambda \leq 1/{\sqrt{2}}$, one can still find out a set $\{{\cal P} = \{P(1), P(2)\}, {\cal Q} = \{Q(1), Q(2)\}\}$ of common but non-commuting Naimark extensions of the POVMs $A$ and $B$ in this range of $\lambda$.  On the other hand, by Theorem~\ref{thm:N_povm}, there will always exist a set of commuting but common Naimark extensions of these two compatible POVMs.

The aforesaid example signifies that for any given set of compatible POVMs for a system, apart from finding out common but commuting sets of Naimark extensions of these POVMs, one may also find out a common but non-commuting set of Naimark extensions -- although such a feature does not hold good in the case of any set of incompatible POVMs -- as, for any given set of incompatible POVMs, each set of its common Naimark extensions must be non-commuting, according to Theorem~\ref{thm:N_povm}.

Therefore, a pertinent question that may be raised at this point, is the following. Given a set of POVMs for one and the same system, can there exist at least one set (call it as a  `characteristic set') of Naimark extensions of these POVMs so that the incompatibility of such a set of Naimark extensions will guarantee the incompatibility of the original set of POVMs? A related issue, in case such set(s) of incompatibility-assuring  Naimark extension(s) exists (exist), would be to figure out the minimal dimension of the associated ancilla Hilbert space.  

\section{Applications}\label{sec:applications}

We now discuss two important applications of our result characterizing the incompatibility of a set of POVMs via their Naimark extensions. We first briefly discuss how this characterization can be used to quantify the incompatibility of a general set of quantum measurements, based on existing incompatibility measures for set of projection-valued measurements. Secondly, we obtain a simple sufficiency condition for the incompatibility of a pair of dichotomic observables, which can then be used to obtain the incompatibility region for a pair of unsharp spin-$\frac{1}{2}$ observables. 

\subsection{Quantifying incompatibility via the Naimark extension}\label{sec:measure}

While a plethora of measures exist that quantify the incompatibility of set of projection-valued measures, there remain several unresolved questions when it comes to  quantifying the incompatibility of set of POVMs. Much of the literature on the subject has focussed on a robustness-based appraoch to quantifying incompatibility of POVMs, which essentially involves quantifying the amount of noise that must be added to a pair of measurements to make them compatible~\cite{heinosaari15}.  

Recent studies show that the universal quantification of incompatibility of a set of quantum measurements through a scalar function is not very easy task. For example, unlike in the case of projective measurements, where mutually unbiased bases (MUBs) are known to correspond to the most incompatible measurements no matter what the metric of incompatibility is, in the case of POVMs different measures of incompatibility indicate different types of measurements to be most incompatible and therefore those measures can not be equivalent measures~\cite{Designolle}. In what follows, we outline an alternate approach to quantifying the incompatibility of a set of POVMs, based on the Naimark extension theorem. 
In this context, it is worth mentioning that a resource theory of incompatibility has been demonstrated in \cite{ Buscemi}.

Let $\chi = \{\cM_{i}\}$ be a set of POVMs and the set $\Pi ^{\Lambda}$ denote the set of projections obtained via a specific Naimark extension $\Lambda$ of each of the POVMs in the set $\chi$. Note that $\Lambda$ is characterized by the ancilla Hilbert space $\cH_A$ of dimension $d_{A}$ and a single fiducial ancilla state $\sigma$. Let  $f(\Pi^{\Lambda})$ be a function which measures the incompatibility of the set $\Pi^{\Lambda}$ such that $f(\Pi^{\Lambda})\geq 0$ for any such $\Pi^A$ and $f(\Pi^{\Lambda})=0$ iff $\Pi^{\Lambda}$ is compatible. Now, we can define an incompatibility measure $g(\chi)$ for the set of POVMs $\chi$ in terms of $f ( . )$ as,
\begin{equation}
g(\chi) = \min_{\Lambda: \Lambda \in \cN} f(\Pi^{\Lambda})
\end{equation}
where $\cN$ denotes the set of all possible Naimark extensions characterized by the ancilla system $\cH_{A}$ and the same ancilla state $\sigma_{A} \in \cB(\cH_{A})$. Theorem~\ref{thm:N_povm} clearly implies that $g(\chi)=0$ if and only if there exists an ancilla Hilbert space $\cH_A$ and a state $\sigma_{A}\in\cB(\cH_{A})$ such that the set $\Pi^{\Lambda}$ realised by the corresponding Naimark extension $\Lambda$ is compatible.  $g(\chi) > 0$, whenever such a joint Naimark extension does not exist, hence showing that $g (.)$ is indeed a faithful measure of incompatibility. 

The minimization over all Naimark extensions is indeed a difficult task, and we address possible ways by which this can be made easier in our concluding section.

\begin{example}
The measure of incompatibility for set of PVMs $\Pi$ defined in \cite{cloning-incompatibility}

\[
    \cQ(\Pi)=
    \begin{cases}
        0 & \text{if $\Pi$ is compatible}\\
        $a positive quantity$ & \text{otherwise.}
    \end{cases}
\]

So, this measure can be extended for the set of POVM $\chi$ as follows.

\begin{equation}
\cQ(\chi)=\sup_{\mu}\cQ(\mu(\chi))
\end{equation}

where $\mu(\chi)$ is a Naimark extension of set $\chi$ using one and the same ancilla state and the same ancilla system.

Clearly

\[
    \cQ(\chi)=
    \begin{cases}
        0 & \text{if $\chi$ is compatible}\\
        $a positive quantity$ & \text{otherwise.}
    \end{cases}
\]

\end{example}

\subsection{Compatibility of dichotomic measurements}\label{sec:dichotomic}

Recall that a \emph{dichotomic} measurement is a two-outcome POVM comprising of exactly two effects. An easily checkable sufficient condition for the compatibility of two dichotomic observables has been derived in~\cite{beneduci}, but the question of whether a similar sufficient condition can be obtained for different classes of observables is still an open problem. Indeed it turns out that this sufficiency condition is too restrictive and cannot for example identify the compatibility condition for a pair of unsharp spin-$\frac{1}{2}$ observables. 

Here we use Theorem~\ref{thm:N_povm} to obtain a more general sufficiency condition for the compatibilty of  a pair of dichotomic observables, compared to the condition derived in~\cite{beneduci} and show that our condition can capture the full compatibility region even for MUBs. For certain special cases, our condition reduces to the condition obtained in~\cite{beneduci}.

\begin{theorem}\label{thm:dichotomic}
Two dichotomic observables $\cA=\{A(1),A(2)\}$ and $\cB=\{B(1),B(2)\}$ acting on Hilbert space $\cH_{S}$ are compatible if there exists a unitary operator $W$ on $\cH_{S}$ satisfying the following conditions.

\begin{eqnarray}
&& [ A(1),B(1)] = Y(W X)- (WX)^{\dagger}Y, \nonumber \\
&& \{A(1),Y W\} - \{ X W^{\dagger},B(1) \}W = YW - X , \nonumber \\
&& \left[ WA(1)W^{\dagger},B(1) \right] + (W X)Y-Y(W X)^{\dagger} = 0 , \label{eq:W_condition}
\end{eqnarray}
where $X=\sqrt{A(1)A(2)}$ and $Y=\sqrt{B(1)B(2)}$.
\end{theorem}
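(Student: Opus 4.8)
## Proof Strategy for Theorem~\ref{thm:dichotomic}

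The plan is to leverage Theorem~\ref{thm:N_povm}: a pair of dichotomic POVMs $\cA, \cB$ is compatible if and only if there exists a single Naimark extension, using a common ancilla state, that sends both to commuting PVMs on the extended space. Since both observables are dichotomic, I will use the smallest useful ancilla, namely a qubit in the fiducial state $\ket{0}\bra{0}$, and write down an explicit $2\times 2$ block form for the projector $P(1)$ dilating $A(1)$ and the projector $Q(1)$ dilating $B(1)$. Following the construction in Appendix~\ref{sec:naimark} (the one used in Example~\ref{ex:non_com}), a Naimark extension of a dichotomic effect $A(1)$ with $A(2)=I-A(1)$ is
\[
P(1) = A(1)\otimes\ket{0}\bra{0} + X U^{\dagger}\otimes\ket{0}\bra{1} + U X\otimes\ket{1}\bra{0} + U A(1) U^{\dagger}\otimes\ket{1}\bra{1},
\]
where $X=\sqrt{A(1)A(2)}$ and $U$ is an arbitrary unitary on $\cH_S$. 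One checks directly that $P(1)^2 = P(1)$ using $X^2 = A(1)-A(1)^2 = A(1)A(2)$ and that $\tr[(\rho\otimes\ket{0}\bra{0})P(1)] = \tr[\rho A(1)]$, so this is a valid common-ancilla Naimark extension for any choice of $U$. Crucially, the freedom in $U$ is precisely where the unitary $W$ in the statement enters: I would take $U = I$ for $P(1)$ and $U = W$ for $Q(1)$ (or some such asymmetric assignment), so that
\[
Q(1) = B(1)\otimes\ket{0}\bra{0} + Y W^{\dagger}\otimes\ket{0}\bra{1} + W Y\otimes\ket{1}\bra{0} + W B(1) W^{\dagger}\otimes\ket{1}\bra{1},
\]
with $Y=\sqrt{B(1)B(2)}$.

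Next I would compute the commutator $[P(1), Q(1)]$ as a $2\times 2$ operator-valued block matrix. The $(0,0)$ block collects terms like $A(1)B(1) + (XU^\dagger)(WY) - \text{(reversed order)}$; with $U=I$ this becomes $[A(1),B(1)] + XWY - (WY)^\dagger X^\dagger$, which upon matching notation ($X,Y$ Hermitian, $X^\dagger = X$) is exactly the first displayed condition $[A(1),B(1)] = Y(WX) - (WX)^\dagger Y$ rearranged — modulo checking the precise placement of $W$ and daggers, which is the bookkeeping I will have to do carefully. The $(0,1)$ and $(1,0)$ blocks give the anticommutator-type condition (the second line of Eq.~\eqref{eq:W_condition}), and the $(1,1)$ block gives the third condition involving $WA(1)W^\dagger$. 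Setting all three blocks to zero is equivalent to $[P(1),Q(1)]=0$. Then, since $P(2) = I - P(1)$ and $Q(2) = I - Q(1)$, commutativity of $P(1)$ with $Q(1)$ immediately forces all four pairwise commutators $[P(i),Q(j)]=0$, so $\cP = \{P(1),P(2)\}$ and $\cQ = \{Q(1),Q(2)\}$ are commuting PVMs on the common extended space $\cH_S\otimes\bC^2$ obtained via the common ancilla state $\ket{0}\bra{0}$. Theorem~\ref{thm:N_povm} (the ``if'' direction — commutativity of a common Naimark extension implies compatibility) then gives compatibility of $\cA$ and $\cB$, completing the proof.

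The main obstacle I anticipate is purely the algebra of matching the three block-wise commutator conditions to the precise form of Eq.~\eqref{eq:W_condition}: one must be careful about whether $X, Y$ as defined by the square roots are Hermitian (they are, being square roots of positive operators $A(1)A(2) = A(1)-A(1)^2$ and similarly for $Y$, and since $A(1)$ and $A(2)=I-A(1)$ commute), about the exact side on which $W$ and $W^\dagger$ act in each block, and about sign conventions coming from the ordering $[P,Q] = PQ - QP$. A secondary subtlety is verifying that $P(1)$ and $Q(1)$ are genuinely projectors for the given construction — this is where the identity operator appearing in the off-diagonal blocks in Example~\ref{ex:non_com} must be replaced by the correct operators $XU^\dagger$, $UX$ (in that example $\lambda$-dependent factors times $I$ reduce to exactly $X$ up to the specific qubit form), and one should confirm the Appendix construction guarantees idempotency for arbitrary $U$. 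Since the theorem only claims a sufficient condition, I do not need to argue that these three equations are necessary — only that their solvability in $W$ produces a commuting common Naimark pair, which is the cleanest route and avoids any converse direction.
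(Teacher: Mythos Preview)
Your strategy is exactly the paper's: build explicit qubit-ancilla Naimark dilations $P(1),Q(1)$ of the form given in Appendix~\ref{sec:naimark} with a free unitary parameter, compute $[P(1),Q(1)]$ block by block, identify the three block conditions with Eq.~\eqref{eq:W_condition}, and invoke Theorem~\ref{thm:N_povm}. The paper keeps two unitaries $U,V$ and sets $W=V^\dagger U$ at the end, whereas you fix one to the identity; these are equivalent (only $V^\dagger U$ ever appears), so your simplification is fine.

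There is one concrete slip in your displayed dilation: the $(1,1)$ block of $P(1)$ must be $U\,A(2)\,U^\dagger = U(I-A(1))U^\dagger$, not $U A(1) U^\dagger$. With your block, the $(0,1)$ entry of $P(1)^2$ is $A(1)XU^\dagger + XU^\dagger\cdot UA(1)U^\dagger = \{A(1),X\}U^\dagger = 2A(1)XU^\dagger \neq XU^\dagger$, so idempotency fails. The correct form (compare Eq.~\eqref{eq:A_proj} and Example~\ref{ex:non_com}) has $A(2)$ in that corner; then the $(0,1)$ block of $P(1)^2$ becomes $(A(1)+A(2))XU^\dagger = XU^\dagger$ as required. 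You already flagged ``verifying that $P(1)$ and $Q(1)$ are genuinely projectors'' as a point to check, so you would have caught this; once it is fixed the block-by-block commutator reproduces exactly the three conditions in Eq.~\eqref{eq:W_condition}, and your anticipated bookkeeping about $W$ versus $W^\dagger$ and sign conventions is indeed the only remaining work.
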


\begin{proof}
Let $\cP=\{P(1), P(2)\}$ be a Naimark extension of the dichotomic observable $\cA=\{A(1), A(2)\}$, constructed via the prescription given in Appendix~\ref{sec:naimark}. The projectors $P(1)$ and $P(2)$ are given in terms of the POVM effects $A(1)$ and $A(2) = I_{\cH_{S}} - A(1)$ as,

\begin{eqnarray}
P(1) &=&\begin{bmatrix}
A(1) & -\sqrt{A(1)(I-A(1))}U^{\dagger}\\
-U\sqrt{A(1)(I-A(1))} & U(I-A(1))U^{\dagger}\\
\end{bmatrix} , \nonumber \\
P(2) &=& \begin{bmatrix}
A(2) &\sqrt{A(2)(I -A(2))}U^{\dagger}\\
U\sqrt{A(2)(I - A(2))} & U(I-A(2))U^{\dagger}\\
\end{bmatrix} , \label{eq:A_proj}
\end{eqnarray}
where, $U$ is an arbitrary unitary matrix on the system space $\cH_{S}$. Clearly for $U=I_{\cH_{S}}$ or $U=-I_{\cH_{S}}$ these projectors result in the same Neimark extension used in~\cite{beneduci}.

It can be checked easily the set $\cP=\{P(1),P(2)\}$ does indeed constitute a PVM. Furthermore, for $i=\{1,2\}$, and for all $\rho \in \cB(\cH_{S})$,
\begin{equation}
\tr_{\cH_{S}} [ \rho A(i) ] =\tr_{\cH_{S}\otimes\cH_{A}} [ (\rho\otimes\ket{0}\bra{0}) P(i) ],
\end{equation}
where $|0\rangle\langle 0| \in \cB(\cH_{A})$ is a fiducial state of the ancilla system $\cH_{A}$. 

The Naimark extension corresponding to the second observable $\cB=\{B(1), B(2)\}$ is characterized by the projectors $\{Q(1), Q(2)\}$, given by,
\begin{eqnarray}
Q(1) &=& \begin{bmatrix}
B(1) & -\sqrt{B(1)(I-B(1))}V^{\dagger}\\
-V\sqrt{B(1)(I-B(1))}& V(I -B(1))V^{\dagger}\\
\end{bmatrix}, \nonumber \\
Q(2) &=& \begin{bmatrix}
B(2) & \sqrt{B(2)(I-B(2))}V^{\dagger}\\
V\sqrt{B(2)(I-B(2))} & V(I-B(2))V^{\dagger}\\
\end{bmatrix} , \label{eq:B_proj}
\end{eqnarray}
where $V$ is an arbitrary unitary matrix on $\cH_{S}$.

To check if the PVMs $\cP$ and $\cQ$ are compatible, we only need to check whether $[P(1),Q(1)]=0$, for this implies that all the other commutators also vanish, as shown below.
\begin{eqnarray}
[P(1),Q(2)] &=& \left[ P(1),(I-Q(1)) \right] = 0, \nonumber \\  
\left[ P(2),Q(1) \right] &=& \left[ (I-P(1)),Q(1) \right] = 0, \nonumber \\
\left[ P(2),Q(2) \right] &=& \left[ (I-P(1)),(I - Q(1)) \right] = 0. 
\end{eqnarray}

Let, $X=\sqrt{A(1)A(2)}$ and $Y=\sqrt{B(1)B(2)}$. Then, the projectors $P(1)$ and $Q(1)$ are given by,
\begin{eqnarray}
P(1) &=& \begin{bmatrix}
A(1) & -XU^{\dagger}\\
-UX & U(I-A(1))U^{\dagger}\\
\end{bmatrix}, \nonumber \\
Q(1) &=& \begin{bmatrix}
B(1) &-YV^{\dagger}\\
-VY & V(I-B(1))V^{\dagger}\\
\end{bmatrix}. \label{eq:naimark_form}
\end{eqnarray}

Since $[A(1) , I_{\cH_{S}} -A(1)]=0$ and $A(1)+A(2)=I_{\cH_{S}}$, the effects $A(1)$ and $A(2)$ share a common set of eigenstates. Therefore, $\sqrt{A(1)A(2)}=\sqrt{A(2)A(1)}=\sqrt{A(1)}\sqrt{A(2)}=\sqrt{A(1)(I-A(1))}=\sqrt{A(2)(I-A(2))}$. These relations are very crucial in the following calculations. In particular, the condition $[P(1),Q(1)]=0$ now reduces to,
\begin{eqnarray}
A(1)B(1) &+& XU^{\dagger}VY = B(1)A(1) + YV^{\dagger}UX \nonumber \\
A(1)YV^{\dagger} &+& XU^{\dagger}VB(2)V^{\dagger} \nonumber \\
&=& B(1) XU^{\dagger}+YV^{\dagger}UA(2)U^{\dagger}, \nonumber \\
UXYV^{\dagger} &+& UA(2)U^{\dagger}VB(2)V^{\dagger} \nonumber \\
&=& VYXU^{\dagger} + VB(2)V^{\dagger}UA(2)U^{\dagger}. \label{eq:cond1}
\end{eqnarray}


When both unitaries $U$ and $V$ are trivial, that is, when $U=V$, the set of equations in Eq.~\eqref{eq:cond1} are not independent and instead reduce to a set of two equations, namely,
\begin{eqnarray}
A(1)B(1)+XY &=& B(1)A(1) + YX, \nonumber \\
A(1)Y+XB(2) &=& B(1)X + YA(2), \label{eq:beneduci}
\end{eqnarray}
which are infact the conditions obtained in~\cite{beneduci}. 

Now defining the unitary matrix $W$ as $V^{\dagger}U=W$, the conditions in Eq.~\eqref{eq:cond1} can be rewritten as,
\begin{eqnarray}
[A(1),B(1)] &=& Y(W X)-(WX)^{\dagger}Y, \nonumber \\
\{A(1),YW\} &-& \{X W^{\dagger},B(1)\} W =Y W-X, \nonumber \\
\left[ WA(1) W^{\dagger}, B(1) \right] &+& ( WX)Y-Y(WX)^{\dagger} = 0,
\end{eqnarray}
thus leading to the desired conditions for compatibility. 
\end{proof}

\subsection{Unsharp spin-$1/2$ observables}\label{sec:unsharp}

In this section we will show how the conditions proved in Theorem~\ref{thm:dichotomic} are helpful to calculate the compatibility region for a pair of unsharp spin-half observables.

\begin{theorem}\label{thm:spin}
Two spin unsharp spin half observables along the directions $\hat{n_1}$ and $\hat{n_2}$ with unsharpness parameter $\lambda_1$ and $\lambda_2$ respectively, are compatible if there exists a unitary matrix $W$ satisfying,
\begin{eqnarray}
&& \lambda_1\lambda_2[\hat{n}_1.\vec{\sigma},\hat{n}_2.\vec{\sigma}] \, = \, \sqrt{1-\lambda_1^2}\sqrt{1-\lambda_2^2}[W -W^{\dagger}], \nonumber \\
&& \lambda_1\sqrt{1-\lambda_2^2} \left\{\frac{\hat{n}_1.\vec{\sigma}}{2}, W\right\} \nonumber \\
&& \; \quad  - \; \lambda_2\sqrt{1-\lambda_1^2} W^{\dagger}\left\{\frac{\hat{n}_2.\vec{\sigma}}{2}, W \right\}=0,  \nonumber \\
&& \lambda_1\lambda_2[W\hat{n}_1.\vec{\sigma}W^{\dagger},\hat{n}_2.\vec{\sigma}] \nonumber \\
&& \; \quad + \; \sqrt{(1-\lambda_1^2)(1-\lambda_2^2)}[ W- W^{\dagger}] = 0.\label{eq:spin-compatibility condition}
\end{eqnarray}
\end{theorem}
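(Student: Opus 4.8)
The plan is to specialize Theorem~\ref{thm:dichotomic} to the case of unsharp spin-$\tfrac12$ observables and simplify the resulting operator equations using the algebra of Pauli matrices. First I would write the effects explicitly as $A(1) = \tfrac{I + \lambda_1\,\hat n_1\cdot\vec\sigma}{2}$, $A(2) = \tfrac{I - \lambda_1\,\hat n_1\cdot\vec\sigma}{2}$, and similarly $B(1) = \tfrac{I + \lambda_2\,\hat n_2\cdot\vec\sigma}{2}$, $B(2) = \tfrac{I - \lambda_2\,\hat n_2\cdot\vec\sigma}{2}$. Then I would compute the operators $X = \sqrt{A(1)A(2)}$ and $Y = \sqrt{B(1)B(2)}$ appearing in Theorem~\ref{thm:dichotomic}. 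Since $A(1)$ and $A(2)$ commute and $(\hat n_1\cdot\vec\sigma)^2 = I$, we get $A(1)A(2) = \tfrac{I - \lambda_1^2}{4} I$, so $X = \tfrac{\sqrt{1-\lambda_1^2}}{2}\,I$, and likewise $Y = \tfrac{\sqrt{1-\lambda_2^2}}{2}\,I$. This scalar-multiple-of-identity form is the key simplification: it makes $X$ and $Y$ central, so all the anticommutators and products in Eq.~\eqref{eq:W_condition} collapse.

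Next I would substitute these into the three conditions of Theorem~\ref{thm:dichotomic} one at a time. For the first, $[A(1),B(1)] = \tfrac{\lambda_1\lambda_2}{4}[\hat n_1\cdot\vec\sigma,\hat n_2\cdot\vec\sigma]$, while $Y(WX) - (WX)^\dagger Y = \tfrac{\sqrt{1-\lambda_1^2}\sqrt{1-\lambda_2^2}}{4}(W - W^\dagger)$ using centrality of $X,Y$; equating and clearing the factor $\tfrac14$ gives the first displayed equation. For the second condition, $\{A(1), YW\} = \sqrt{1-\lambda_2^2}\,\{A(1), W\}/2$ and $\{XW^\dagger, B(1)\}W = \tfrac{\sqrt{1-\lambda_1^2}}{2} W^\dagger\{B(1),W\}$ — here one needs to be slightly careful with the ordering, since $W$ and $W^\dagger$ do not commute with the Pauli operators — and the right side $YW - X = \tfrac{\sqrt{1-\lambda_2^2}}{2}W - \tfrac{\sqrt{1-\lambda_1^2}}{2}I$. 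Expanding $A(1) = \tfrac12 I + \tfrac{\lambda_1}{2}\hat n_1\cdot\vec\sigma$ and $B(1) = \tfrac12 I + \tfrac{\lambda_2}{2}\hat n_2\cdot\vec\sigma$, the identity parts should cancel against the right-hand side (since $\{ \tfrac12 I, W\} = W$ and $\tfrac12 W^\dagger\{\tfrac12 I,W\} = \tfrac12 I$, etc.), leaving exactly $\lambda_1\sqrt{1-\lambda_2^2}\{\tfrac{\hat n_1\cdot\vec\sigma}{2},W\} - \lambda_2\sqrt{1-\lambda_1^2}\,W^\dagger\{\tfrac{\hat n_2\cdot\vec\sigma}{2},W\} = 0$. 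For the third condition, $[WA(1)W^\dagger, B(1)] = \tfrac{\lambda_1\lambda_2}{4}[W(\hat n_1\cdot\vec\sigma)W^\dagger,\hat n_2\cdot\vec\sigma]$ (the identity part drops out) and $(WX)Y - Y(WX)^\dagger = \tfrac{\sqrt{(1-\lambda_1^2)(1-\lambda_2^2)}}{4}(W - W^\dagger)$; clearing $\tfrac14$ yields the third equation.

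The main obstacle I anticipate is bookkeeping in the second condition: tracking which factors of $W$ sit on the left versus the right of the Pauli operators, and verifying that the pure-identity contributions from the $\tfrac12 I$ pieces of $A(1)$ and $B(1)$ cancel exactly against $YW - X$ so that only the traceless Pauli pieces survive in the stated form. Everything else is a mechanical substitution exploiting that $X$ and $Y$ are scalars. I would therefore present the computation of $X$ and $Y$ carefully, note that centrality trivializes the products and anticommutators involving them, and then check each of the three equations of Eq.~\eqref{eq:W_condition} reduces to the corresponding line of Eq.~\eqref{eq:spin-compatibility condition} after cancelling the overall numerical factor. The compatibility conclusion then follows immediately from Theorem~\ref{thm:dichotomic}.
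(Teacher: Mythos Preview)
Your proposal is correct and follows exactly the paper's approach: write the unsharp spin-$\tfrac12$ effects explicitly, observe that $X=\sqrt{A(1)A(2)}=\tfrac{\sqrt{1-\lambda_1^2}}{2}I$ and $Y=\tfrac{\sqrt{1-\lambda_2^2}}{2}I$ are scalars, and substitute into the three conditions of Theorem~\ref{thm:dichotomic}. The paper's proof is in fact terser than yours---it simply states $X$ and $Y$ and then asserts that Eq.~\eqref{eq:W_condition} ``reduces to'' Eq.~\eqref{eq:spin_compat}---so your more explicit bookkeeping (in particular the cancellation of the identity pieces in the second condition) only adds clarity.
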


\begin{proof}
Consider two unsharp spin half observables $\cA=\{A(1), I_{\cH_{S}} - A(1)\}$, $\cB = \{ B(1), I_{\cH_{S}} -B(1) \}$ with $A(1) = \frac{ I_{\cH_{S}} +\lambda_1\hat{n}_1.\vec{\sigma}}{2}$
and $B(1) = \frac{ I_{\cH_{S}} + \lambda_2\hat{n}_2.\vec{\sigma}}{2}$. Therefore, the operators $X$ and $Y$ defined in Theorem~\ref{thm:dichotomic} are in this case given by,
$X=\sqrt{A(1)(I-A(1)}=\frac{\sqrt{1-\lambda_1^2}}{2} I$ and
$Y=\sqrt{B(1)(I-B(1))}=\frac{\sqrt{1-\lambda_2^2}}{2} I$.

So, the set of equations in Eq.~\eqref{eq:W_condition} reduces to,
\begin{eqnarray}
&& \lambda_1\lambda_2[\hat{n}_1.\vec{\sigma},\hat{n}_2.\vec{\sigma}] \, = \, \sqrt{1-\lambda_1^2}\sqrt{1-\lambda_2}[W-W^{\dagger}], \nonumber \\
&& \lambda_1\sqrt{1-\lambda_2^2}\left\{\frac{\hat{n}_1.\vec{\sigma}}{2}, W \right\} \nonumber \\
&& \; \quad - \; \lambda_2\sqrt{1-\lambda_1^2}W^{\dagger}\left\{ \frac{\hat{n}_2.\vec{\sigma}}{2}, W\right\} =0, \nonumber \\
&& \lambda_1\lambda_2[ W\hat{n}_1.\vec{\sigma}W^{\dagger},\hat{n}_2.\vec{\sigma}] \nonumber \\
&& \; \quad + \; \sqrt{(1-\lambda_1^2)(1-\lambda_2^2)}[W-W^{\dagger}] = 0, \label{eq:spin_compat}
\end{eqnarray}
as desired. 
\end{proof}

\subsubsection{Compatibility region for $\sigma_x$ and $\sigma_y$}

Here we show that incompatibility region of the unsharp observables corresponding to $\sigma_x$ and $\sigma_y$ can be obtained using above condition. Interestingly, for $\hat{n_1}=\{1,0,0\}$ and $\hat{n_2}=\{0,1,0\}$ and for $W=e^{i\theta}\sigma_z$ all the above conditions is satisfied for,

\begin{align}
\lambda_1^2+\lambda_2^2&=1-\lambda_1^2\lambda_2^2\cot^2\theta.\label{eq:lambda}
\end{align}

Now, the LHS of Eq.~\eqref{eq:lambda} is positive. Hence the restriction on $\cot\theta$ for the existence of a valid solution for the above equation is,
\begin{equation}
\cot^2\theta\leq \frac{1}{\lambda_1^2\lambda_2^2}.
\end{equation}

Therefore by changing $\theta$ we can make the RHS of the above equation vary from $0$ to $1$. Hence the unsharp version of $\sigma_x$ and unsharp version of $\sigma_y$ are compatible for the parameter region given by,
\begin{equation}
\lambda_1^2+\lambda_2^2\leq 1.
\end{equation}

This is indeed the well known compatibility region for the unsharp versions of $\sigma_x$ and $\sigma_y$~\cite{Heinosaari_incompatibility_region,
Busch_incompatibility_region}. We expect that our condition would similarly enable us to characterize the compatibility region for sets of unsharp or noisy MUBs~\cite{designolle_mub}. It is possible that the construction used in Theorem~\ref{thm:spin} can capture the full compatibility region for any pair of spin $\frac{1}{2}$ observables.

Also, using $W=V^{\dagger}U = e^{i\theta}\sigma_z$ in the Naimark extensions of unsharp versions of $\sigma_x$ and $\sigma_y$ of the form given in \eqref{eq:naimark_form}, and thereafter taking the Jordon product and reducing to the system Hilbert space, one easily gets the joint observable of these two unsharp observables. Therefore, the condition in Eq.~\eqref{eq:spin-compatibility condition}is also very useful to get the joint observable easily. Moreover it is clear that, if for some cases more than one unitary exists then there exist many joint observables, one corresponding to each unitary.

%


Finally, we discuss the well known case of an incompatible  set of three observables in a two-dimensional Hilbert space, such that every pair of observables in the set is compatible~\cite{heinosaari08, jointmeas_graph,Liang,Kunjwal_Sibasish}. 

\begin{example}[Pairwise compatible, but triple-wise incompatible]
 Consider a set of three dichotomic qubit observables $\cA_1=\{A_1(1),A_1(2)\}$,$\cA_2=\{A_2(1),A_2(2)\}$, and $\cA_3=\{A_3(1),A_3(2)\}$ such that $A_1(1)=\frac{\Id+\lambda\sigma_x}{2}$,  $A_2(1)=\frac{\Id+\lambda\sigma_y}{2}$, and $A_3(1)=\frac{\Id+\lambda\sigma_z}{2}$ where $\lambda=\frac{1}{\sqrt{2}}$. This set is not compatible, since $\lambda>\frac{1}{\sqrt{3}}$. Therefore, theorem \eqref{thm:N_povm} says that compatible common Naimark extensions do not exist. Now consider the Naimark extensions $\cQ_1$, $\cQ_2$, and $\cQ_3$ of the observables $\cA_1$, $\cA_2$, and $\cA_3$ respectively, with respect to the two-dimensional ancilla state $\ket{0}\bra{0}$, such that, 
 \begin{align}
& Q_1(1)=\begin{bmatrix}
 A_1(1)&-\sqrt{A_1(1)A_1(2)} \\
-\sqrt{A_1(1)A_1(2)}&A_1(2) \\
 \end{bmatrix}\\
 &Q_2(1)=\begin{bmatrix}
 A_2(1)&-i\sqrt{A_2(1)A_2(2)}\sigma_z \\
i\sigma_z\sqrt{A_2(1)A_2(2)}&A_2(2) \\
 \end{bmatrix}\\ 
 &Q_3(1)=\begin{bmatrix}
 A_3(1)&i\sqrt{A_3(1)A_3(2)}\sigma_y \\
-i\sigma_y\sqrt{A_3(1)A_3(2)}&A_3(2) \\
 \end{bmatrix}.
 \end{align}
 Now, from theorem \eqref{thm:spin} one can easily check that $[Q_1(1),Q_2(1)]=0$, $[Q_1(1),Q_3(1)]=0$, but, $[Q_2(1),Q_3(1)]\neq 0$. Thus $\cQ_{1}, \cQ_{2}, \cQ_{3}$ constitute a set of common commuting Naimark extensions for the pairs $\{\cA_{1}, \cA_{2}\}$ and $\{\cA_{1}, \cA_{3}\}$, but not for the pair $\{\cA_{2}, \cA_{3}\}$.
 
 Now, consider another set of  common Naimark extensions $\cQ_2^{\prime}$, and $\cQ_3^{\prime}$ with respect to the two-dimensional ancilla state $\ket{0}\bra{0}$ for the observables $\cA_2$, and $\cA_3$ respectively, such that,
 \begin{align}
 &Q_2^{\prime}(1)=\begin{bmatrix}
 A_2(1)&-\sqrt{A_2(1)A_2(2)} \\
-\sqrt{A_2(1)A_2(2)}&A_2(2) \\
 \end{bmatrix}\\ 
 &Q_3^{\prime}(1)=\begin{bmatrix}
 A_3(1)&-i\sqrt{A_3(1)A_3(2)}\sigma_x \\
i\sigma_x\sqrt{A_3(1)A_3(2)}&A_3(2) \\
 \end{bmatrix}.
 \end{align}
 
 Again, from theorem \eqref{thm:spin} one can easily check that $[Q_2^{\prime}(1),Q_3^{\prime}(1)]=0$. Thus we see that it is indeed possible to obtain (distinct) common Naimark extensions for each pair of observables in the set $\{\cA_{1}, \cA_{2}, \cA_{3}\}$, showing that the set is indeed pairwise compatible.
\end{example}

\section{Conclusions}\label{sec:concl}

We have shown how the compatibility of a set of quantum measurements can be characterized via their Naimark extensions. Specifically, we prove that a set of POVMs is compatible if and only if they can be extended to a set of commuting projective measurements, via Naimark extensions constructed using the same ancilla Hilbert space and identical ancilla states. 

Our result opens up new avenues of research into the incompatibility of POVMs, a question of fundamental importance in the context of both quantum foundations as well as quantum information processing. Firstly, it provides a more physical and constructive approach to quantifying incompatibility of POVMs, in terms of unitaries and projective measurements on an extended space. Secondly, our work provides an alternate route to quantifying the incompatibility of a set of quantum measurements, based on existing measures of incompatibility that hold for projective measurements. Finally, in the case of dichotomic observables, our results enable us to obtain a simple and checkable sufficiency condition for the incompatibility of a pair of dichotomic observables in any dimension. This condition can then be used to characterize the compatibility regions for certain important classes of measurements, including unsharp qubit measurements. 

Given that the conditions obtained in Eq.~\eqref{eq:spin-compatibility condition} can capture full compatibility of any two unsharp spin half observables, we are tempted to conjecture that any two unsharp spin observables are compatible if and only if they satisfy Eq.~\eqref{eq:spin-compatibility condition}. Furthermore, using the form of the unitary $W$ that satisfies the compatibility conditions, one has an concrete prescription to construct the joint observable. We know that traditional calculation of the compatibility region and joint observable is difficult even for unsharp MUBs in higher dimensions~\cite{Designolle}. In our approach, once the suitable Naimark extension is constructed, further calculations are straightforward. So, in effect, we have a simpler procedure to obtain both of these. 



An important avenue for future research is the question raised at the end of Example~\ref{ex:non_com}, namely, whether it is possible to identify a \emph{characteristic set} of common Naimark extensions of a given set of POVMs such that the commutativity (non-commutativity) of the Naimark extensions in the set will guarantee the  compatibility (incompatibility) of the POVMs in the set. From the perspective of quantifying incompatibility of any given set $\chi$ of POVMs -- expressed in  terms of the quantity ${\cal Q}(\chi)$ -- identifying such a characteristic set of Naimark extensions is of fundamental importance, at least to get an estimate of the incompatibility of the set $\chi$.

\section{Acknowledgement}
We would like to thank Dr. Ravi Kunjwal and Dr. M. D. Srinivas for their valuable comments and suggestions.

\appendix
\section{Construction of Naimark extension}\label{sec:naimark}

The Naimark extension associated with a POVM is not unique and there are several approaches to obtaining a Naimark extension for a given POVM. In our work we make use of the tensor product construction described in~\cite{coherence_povm}. We provide the details of this Naimark construction here, for the specific case of dichotomic observables, for completeness. 

Consider an $N$-outcome observable $\cM$ acting on a $d$-dimensional space $\cH_{S}$ with effects $\{E(i) \, | \, I_{\cH_{S}} \geq E(i)\geq 0, E(i)\in\cB(\cH_S)\forall \{1,\ldots,N\}\}$ satisfying $\sum_i E(i)= I_{\cH_{S}}$. 
%
%
%
Following~\cite{coherence_povm}, the Naimark extension of the POVM $\cM$ can then be constructed via the projectors
\begin{equation}
P(i)=\sum_{a,b=0}^{N-1} (A^{\dagger}_{i,a}A_{i,b})_{\cH_{S}} \otimes(\ket{a}\bra{b})_{\cH_{A}}, \label{eq:proj}
\end{equation}
for all $i \in [1,N]$, where $A_{i,0}=U_i\sqrt{E(i)}$. For the case of dichotomic observables ($N=2$) the conditions on the operators $A_{i,0}$ reduces to the following:

\begin{eqnarray}
A^{\dagger}_{1,0}A_{1,0}+A^{\dagger}_{2,0}A_{2,0} &=& I_{\cH_{S}} \nonumber \\
A^{\dagger}_{1,0}A_{1,1}+A^{\dagger}_{2,0}A_{2,1}&=& 0  \nonumber \\
A^{\dagger}_{1,1}A_{1,0}+A^{\dagger}_{2,1}A_{2,0}&=& 0 \nonumber \\
A^{\dagger}_{1,1}A_{1,1}+A^{\dagger}_{2,1}A_{2,1} &=& I _{\cH_{S}} \nonumber \\
A_{1,0}A^{\dagger}_{1,0}+A_{1,1}A^{\dagger}_{1,1} &=& I_{\cH_{S}} \nonumber \\
A_{1,0}A^{\dagger}_{2,0}+A_{1,1}A^{\dagger}_{2,1} &=& 0 \nonumber \\
A_{2,0}A^{\dagger}_{1,0}+A_{2,1}A^{\dagger}_{1,1}&=& 0 \nonumber \\
A_{2,0}A^{\dagger}_{2,0}+A_{2,1}A^{\dagger}_{2,1} &=& I_{\cH_{S}} \label{eq:A_matrix}
\end{eqnarray}

Here, $A_{1,0}=U_1\sqrt{E(1)}$ and $A_{2,0}=U_2\sqrt{E(2)}$, where $U_{1}$ and $U_{2}$ are arbitrary unitary operators on the system Hilbert space $\cH_{S}$.  Thus, the problem of finding Naimark extension is now reduced to the problem of finding operators $A_{1,1}$ and $A_{2,1}$ simultaneously satisfying the set of equations in Eq.~\eqref{eq:A_matrix}. For simplicity we may take $U_0=U_1=U$. Now, suppose the POVM effects have a spectral decomposition given by,
\begin{equation}
E(i) = \sum_{j=1}^{d}\lambda_{ij}\ket{j}\bra{j},
\end{equation}
then ,
\begin{align}
\sqrt{UE(i)U^{\dagger}}&=\sqrt{\sum_{j}\lambda_{ij}U\ket{j}\bra{j}U^{\dagger}}\nonumber\\
&=\sum_{j}\sqrt{\lambda_{ij}}\ket{Uj}\bra{Uj}\nonumber\\
&=\sum_{j}\sqrt{\lambda_{ij}}U\ket{j}\bra{j}U^{\dagger}\nonumber\\
&=U\sqrt{E(i)}U^{\dagger}.
\end{align}

So, one choice of solution for the operator $A_{2,1}$ in Eq.~\eqref{eq:A_matrix} is,
\begin{align}
A_{2,1}&=\sqrt{I -UE(2)U^{\dagger}}\nonumber\\
&=U\sqrt{E(1)}U^{\dagger}
\end{align}

Corresponding to this operator $A_{2,1}$, one possible solution for $A_{1,1}$ in Eq.~\eqref{eq:A_matrix} is,
\begin{align}
A_{1,1}&=\sqrt{I -UE(1)U^{\dagger}}\nonumber\\
&=U\sqrt{E(2)}U^{\dagger}
\end{align}

It can be easily checked that for above solution for $A_{1,1}$ and $A_{2,1}$ all other equations are satisfied.

So, from equation\eqref{eq:proj}, one possible set of projectors $\{P(1), P(2)\}$ for the Naimark extension of $\cM$ is given by,
\begin{align}\label{Naimark}
P(1)&=\begin{bmatrix}
A^{\dagger}_{1,0}A_{1,0}&A^{\dagger}_{1,0}A_{1,1}\\
A^{\dagger}_{1,1}A_{1,0}&A^{\dagger}_{1,1}A_{1,1}
\end{bmatrix}\nonumber\\
&=\begin{bmatrix}
E(1)&-\sqrt{E(1)E(2)}U^{\dagger}\\
-U\sqrt{E(1)E(2)}&UE(2)U^{\dagger}
\end{bmatrix}
\end{align}

and

\begin{align}
P(2)&=\begin{bmatrix}
A^{\dagger}_{2,0}A_{2,0}&A^{\dagger}_{2,0}A_{2,1}\\
A^{\dagger}_{2,1}A_{2,0}&A^{\dagger}_{2,1}A_{2,1}
\end{bmatrix}\nonumber\\
&=\begin{bmatrix}
E(2)&\sqrt{E(1)E(2)}U^{\dagger}\\
U\sqrt{E(1)E(2)}&UE(1)U^{\dagger}
\end{bmatrix} .\label{Naimark-2}
\end{align}

\end{document}